\newtheorem{theorem}{Theorem}
\newcolumntype{Y}{>{\centering\arraybackslash}X}
\newcommand{\fixme}[2]{\ifx&#2&{\leavevmode\color{red}#1}\else{\leavevmode\color{red}FIXME\{}#1{\leavevmode\color{red}\}}\footnote{{\leavevmode\color{red}#2}}\PackageWarning{Fixme}{#1: #2}\fi}
\newcommand{\newstuff}[2]{\ifx&#2&{\leavevmode\color{blue}#1}\else{\leavevmode\color{blue}FIXME\{}#1{\leavevmode\color{blue}\}}\footnote{{\leavevmode\color{blue}#2}}\PackageWarning{Newstuff}{#1: #2}\fi}
\newcommand{\Marco}[2]{\ifx&#2&{\leavevmode\color{blue}#1}\else{\leavevmode\color{blue}FIXME\{}#1{\leavevmode\color{blue}\}}\footnote{{\leavevmode\color{blue}#2}}\PackageWarning{Marco}{#1: #2}\fi}
\DeclareMathOperator*{\sgn}{sgn}
\DeclareMathOperator*{\arctanh}{arctanh}
\title{Rate-Flexible Fast Polar Decoders}
\author{
		Seyyed~Ali~Hashemi,
		%~\IEEEmembership{Student~Member,~IEEE,}
        Carlo~Condo,
        %~\IEEEmembership{Member,~IEEE,}
        Marco~Mondelli,
        %~\IEEEmembership{Member,~IEEE,}
        Warren~J.~Gross
        %~\IEEEmembership{Senior~Member,~IEEE}% <-this % stops a space
\thanks{S.~A.~Hashemi was with the Department of Electrical and Computer Engineering, McGill University, Montr\'eal, QC H3A 0G4, Canada. He is now with the Department of Electrical Engineering, Stanford University, Stanford, CA 94305, USA (email: ahashemi@stanford.edu).}% <-this % stops a space
\thanks{C.~Condo was with the Department of Electrical and Computer Engineering, McGill University, Montr\'eal, QC H3A 0G4, Canada. He is now with Huawei Technologies France, 92100 Boulogne-Billancourt, France (e-mail:
carlo.condo@huawei.com).}% <-this % stops a space
\thanks{M.~Mondelli is with the Department of Electrical Engineering, Stanford University, Stanford, CA 94305, USA (email: mondelli@stanford.edu).}% <-this % stops a space
\thanks{W.~J.~Gross is with the Department of Electrical and Computer Engineering, McGill University, Montr\'eal, QC H3A 0G4, Canada (e-mail: warren.gross@mcgill.ca).}% <-this % stops a space
}
\begin{document}

\maketitle
\begin{abstract}
Polar codes have gained extensive attention during the past few years and recently they have been selected for the next generation of wireless communications standards (5G). Successive-cancellation-based (SC-based) decoders, such as SC list (SCL) and SC flip (SCF), provide a reasonable error performance for polar codes at the cost of low decoding speed. Fast SC-based decoders, such as Fast-SSC, Fast-SSCL, and Fast-SSCF, identify the special constituent codes in a polar code graph off-line, produce a list of operations, store the list in memory, and feed the list to the decoder to decode the constituent codes in order efficiently, thus increasing the decoding speed. However, the list of operations is dependent on the code rate and as the rate changes, a new list is produced, making fast SC-based decoders not rate-flexible. In this paper, we propose a completely rate-flexible fast SC-based decoder by creating the list of operations directly in hardware, with low implementation complexity. We further propose a hardware architecture implementing the proposed method and show that the area occupation of the rate-flexible fast SC-based decoder in this paper is only $38\%$ of the total area of the memory-based base-line decoder when 5G code rates are supported.
\end{abstract}

\begin{IEEEkeywords}
polar codes, successive-cancellation decoding, list decoding, hardware implementation.
\end{IEEEkeywords}

\IEEEpeerreviewmaketitle

\section{Introduction} \label{sec:intro}

Polar codes are a family of channel codes which can provably achieve the capacity of a binary memoryless symmetric (BMS) channel with the low-complexity successive-cancellation (SC) decoding algorithm \cite{arikan}. However, this capacity-achieving property under SC decoding only occurs as the code length tends towards infinity. For practical values of code length, SC decoding fails to provide a reasonable error-correction performance.

In order to improve the error-correction performance of SC decoding, SC list (SCL) \cite{tal_list} and SC flip (SCF) \cite{afisiadis} decoders run multiple SC decoders in parallel and in series, respectively. Therefore, SCL improves the error-correction performance of SC at the cost of higher area occupation when implemented on hardware, while SCF improves the error-correction performance of SC at the cost of higher latency and lower throughput. With this error-correction performance improvement, polar codes were selected as a channel coding scheme for the enhanced mobile broadband (eMBB) control channel in the next generation of wireless communications standard (5G).

SC-based decoding algorithms such as SC, SCL, and SCF, suffer from high latency and low throughput when implemented on hardware. This is due to the serial nature of SC decoding in which the decoding proceeds bit by bit. In order to address this issue, polar codes where shown to be a concatenation of smaller constituent codes which can be decoded in parallel \cite{alamdar,sarkis}. These constituent codes are shown to add small implementation complexity overhead while keeping the error-correction performance of SC unchanged. In \cite{hanif}, more constituent codes were identified and low-complexity parallel decoders were designed to increase the throughput of SC decoders even further. It was shown in \cite{hashemi_SSCL_TCASI,hashemi_FSSCL_TSP} that the constituent codes can be decoded efficiently under SCL decoding while keeping the error-correction performance of SCL decoder unaltered. The same approach was applied to the SCF decoder in \cite{giardFlip}.

The construction of polar codes is based on the identification of reliable bit-channels through which information bits are transmitted. The remaining bit-channels carry fix values and are called frozen bits. The location of the frozen bits and of the information bits is known to the encoder and the decoder. In SC-based decoders, the frozen and information bit sequence can be either stored in a memory, or computed on-line given the bit-channel relative reliability vector and desired code rate, as proposed in \cite{Condo_SIPS}. In fact, the latter approach is significantly more efficient in case of multi-code decoders, and is facilitated by nested reliability vectors as those selected for the 5G eMBB control channel \cite{3gpp_polarSequence}. Therefore, in 5G, the polar encoder and decoder are provided with a vector of bit indices in descending reliability order and an information length $K$, from which the encoder and the decoder should extract the frozen/information bit sequence. It should be noted that the number of information bits for polar codes in the 5G eMBB control channel can be any value between $12$ and $1706$ \cite{3gpp_polarInfo}. Thus, the encoder and the decoder should be able to support a vast range of code rates.

Fast SC-based decoders rely on the identification of the type and the length of constituent codes in a polar code. While the calculation of the frozen/information bit sequence is straightforward and can be performed by simply assigning information bits to the first $K$ elements of the reliability vector, the direct calculation of the list of operations for fast SC-based decoders requires complicated controller logic \cite{sarkis}. Therefore, the identification of the type and the length of constituent codes is performed off-line and the decoding order is stored in a dedicated memory as a list of operations \cite{sarkis,hashemi_SSCL_TCASI,hashemi_FSSCL_TSP}. The decoder fetches the list of operations from memory to decode the constituent codes in order one by one. The main drawbacks of the aforementioned fast SC-based decoders are twofold: first, the list of operations requires high memory usage when implemented on hardware. Second, the list of operations is highly dependent on the rate of the polar code and as the rate changes, the list of operations changes too. Therefore, for 5G applications which require the support of multiple rates, multiple lists of operations need to be stored in memory. This in turn increases the hardware implementation overhead and renders fast SC-based decoders not rate-flexible.

In this paper, we propose completely rate-flexible fast SC-based decoders by introducing a method to infer the list of operations directly in hardware by using the bit-channel relative reliability vector and without the need to store it in memory. We show that the type and the length of a constituent code in a polar code can be identified with low hardware implementation complexity, by checking only a few bits of the constituent code. We further show that the list of operations adapts with the rate of the code, allowing the resulting fast SC-based decoder to be completely rate-flexible. We design and implement a hardware architecture for the proposed decoder and show that the memory required to store the list of operations can be completely removed, resulting in significantly lower decoder area occupation.

The remainder of this paper is organized as follows: Section~II reviews polar codes, SC-based decoding algorithms, and their fast counterparts. We propose the rate-flexible fast decoder for polar codes in Section~III. In Section~IV, a hardware architecture to implement the proposed method is introduced. Section~V provides the hardware implementation results and comparisons with state of the art. Finally, conclusions are drawn in Section~VI.

\section{Preliminaries} \label{sec:prel}

\subsection{Polar Codes} \label{sec:prel:polar}

A polar code of length $N=2^n$ that carries $K$ information bits has a rate $R = K/N$ and can be represented as $\mathcal{P}(N,K)$. It can be constructed using a lower-triangular generator matrix $\mathbf{G}$ as
\begin{equation}
\mathbf{x} = \mathbf{u} \mathbf{G} \text{,} \label{eq:polarGen}
\end{equation}
where $\mathbf{x} = \{x_0,x_1,\ldots,x_{N-1}\}$ is the vector of coded bits and $\mathbf{u} = \{u_0,u_1,\ldots,u_{N-1}\}$ is the vector of input bits. The matrix $\mathbf{G} = \mathbf{B}_N \mathbf{F}^{\otimes n}$ where $\mathbf{B}_N$ is the bit-reversal permutation matrix, and $\mathbf{F}^{\otimes n}$ is the $n$-th Kronecker product of the polarizing matrix $\mathbf{F} = \left[\begin{smallmatrix} 1&0\\1&1 \end{smallmatrix}\right]$.

As $N$ goes toward infinity, the polarization phenomenon creates bit-channels that are either completely noisy or completely noiseless and the fraction of noiseless bit-channels equals the channel capacity. For finite practical code lengths, the polarization of bit-channels is incomplete, therefore, there are bit-channels that are partially noisy. In principle, a bit-channel relative reliability vector $\mathbf{v} = \{v_0,v_1,\ldots,v_{N-1}\}$, where $0 \leq v_i < N$, is generated  and fed into the encoder and the decoder based on the polarization phenomenon which shows the rank of each bit-channel. Thus, $\mathbf{v}$ is a vector of integers such that if $v_i < v_j$, then bit-channel $i$ is more reliable (less noisy) than bit-channel $j$. The polar encoding process consists of the classification of the bit-channels in $\mathbf{u}$ into two groups based on $\mathbf{v}$: the $K$ good (more reliable) bit-channels which carry the information bits, and the $N-K$ bad (less reliable) bit-channels that are fixed to a predefined value (usually $0$). This classification can be represented as a sequence of binary values $\mathbf{s} = \{s_0,s_1,\ldots,s_{N-1}\}$ where
\begin{equation}
s_i = \begin{cases} 0 &\mbox{if } v_i \geq K \text{,} \\ 
1 & \mbox{if } v_i < K \text{.} \end{cases}
\end{equation}
More formally, let $W$ be a BMS channel with input alphabet $\mathcal{X}=\{0,1\}$ and output alphabet  
$\mathcal{Y}$, and let $\{W(y \mid x) : x\in \mathcal{X}, y\in \mathcal{Y}\}$ be the transition probabilities. In order to quantify the reliability of the channel $W$, we use the Bhattacharyya parameter $Z(W)\in [0,1]$, that is defined as
\begin{align}\label{eq:Battapar}
& Z(W)= \sum_{y \in \mathcal{Y}} \sqrt{W(y\mid 0)W(y \mid 1)}.
\end{align}
Hence, the good bit-channels are the ones that have the lowest Bhattacharyya parameter.

\subsection{SC-Based Decoding} \label{sec:prel:SCDec}

SC-based decoding algorithms can be represented as a depth-first binary tree search with priority to the left branches as depicted in Fig.~\ref{fig:SCDec}. Two kinds of messages are passed between the nodes in the graph: the soft log-likelihood ratio (LLR) values $\bm{\alpha} = \{\alpha_0,\alpha_1,\ldots,\alpha_{2T-1}\}$ which are passed from a parent node at level $\log_2(2T)=t+1$ to the child nodes at level $\log_2(T)=t$, and the hard bit estimates $\bm{\beta} = \{\beta_0,\beta_1,\ldots,\beta_{2T-1}\}$ which are passed from a child node at level $t$ to a parent node at level $t+1$.

\begin{figure}
  \centering
  \begin{tikzpicture}[scale=1.9, thick]
\newcommand\Triangle[1]{-- ++(0:2*#1) -- ++(120:2*#1) --cycle}
\newcommand\Square[1]{+(-#1,-#1) rectangle +(#1,#1)}

  \draw (0,0) circle [radius=.05];
  
  \draw (-.05,0) -- (.05,0);
  \draw (0,-.05) -- (0,.05);

  \draw (-1.05,-.55) \Triangle{.05};
  \draw (1,-.5) \Square{.05};
%  \draw [gray, very thick] (-1,-.5) circle [radius=.05];
%  \fill [gray, very thick] (1,-.5) circle [radius=.05];

  \draw (-1.5,-1) circle [radius=.05];
  \draw (-.55,-1.05) \Triangle{.05};
  \draw (.5,-1) \Square{.05};
%  \draw [gray, very thick] (-.5,-1) circle [radius=.05];
%  \fill [gray, very thick] (.5,-1) circle [radius=.05];
  \fill (1.5,-1) circle [radius=.05];

  \draw (-1.75,-1.5) circle [radius=.05];
  \draw (-1.25,-1.5) circle [radius=.05];
  \draw (-.75,-1.5) circle [radius=.05];
  \fill (-.25,-1.5) circle [radius=.05];
  \draw (.25,-1.5) circle [radius=.05];
  \fill (.75,-1.5) circle [radius=.05];
  \fill (1.25,-1.5) circle [radius=.05];
  \fill (1.75,-1.5) circle [radius=.05];

  \node at (-1.75,-1.7) {$\hat{u}_0$};
  \node at (-1.25,-1.7) {$\hat{u}_1$};
  \node at (-.75,-1.7) {$\hat{u}_2$};
  \node at (-.25,-1.7) {$\hat{u}_3$};
  \node at (.25,-1.7) {$\hat{u}_4$};
  \node at (.75,-1.7) {$\hat{u}_5$};
  \node at (1.25,-1.7) {$\hat{u}_6$};
  \node at (1.75,-1.7) {$\hat{u}_7$};

  \draw (0,-.05) -- (-1,-.45);
  \draw (0,-.05) -- (1,-.45);

  \draw (-1,-.55) -- (-1.5,-.95);
  \draw (-1,-.55) -- (-.5,-.95);
  \draw (1,-.55) -- (.5,-.95);
  \draw (1,-.55) -- (1.5,-.95);

  \draw (-1.5,-1.05) -- (-1.75,-1.45);
  \draw (-1.5,-1.05) -- (-1.25,-1.45);
  \draw (-.5,-1.05) -- (-.75,-1.45);
  \draw (-.5,-1.05) -- (-.25,-1.45);
  \draw (.5,-1.05) -- (.25,-1.45);
  \draw (.5,-1.05) -- (.75,-1.45);
  \draw (1.5,-1.05) -- (1.25,-1.45);
  \draw (1.5,-1.05) -- (1.75,-1.45);

  \draw [very thin,gray,dashed] (-2,0) -- (2,0);
  \draw [very thin,gray,dashed] (-2,-.5) -- (2,-.5);
  \draw [very thin,gray,dashed] (-2,-1) -- (2,-1);
  \draw [very thin,gray,dashed] (-2,-1.5) -- (2,-1.5);

  \node at (-2.25,0) {$t=3$};
  \node at (-2.25,-.5) {$t=2$};
  \node at (-2.25,-1) {$t=1$};
  \node at (-2.25,-1.5) {$t=0$};

  \draw [->] (-.12,-.05) -- (-1,-.4) node [above=-.1cm,midway,rotate=25] {$\bm{\alpha}$};
  \draw [->] (-.88,-.45) -- (0,-.1) node [below=-.1cm,midway,rotate=25] {$\bm{\beta}$};

  \draw [->] (-1.06,-.55) -- (-1.5,-.9) node [above=-.1cm,midway,rotate=40] {$\bm{\alpha}^{\ell}$};
  \draw [->] (-1.44,-.95) -- (-1.0,-0.6) node [below=-.1cm,midway,rotate=40] {$\bm{\beta}^{\ell}$};

  \draw [<-] (-.94,-.55) -- (-.5,-.9) node [above=-.1cm,midway,rotate=-40] {$\bm{\beta}^{\text{r}}$};
  \draw [<-] (-.56,-.95) -- (-0.975,-.625) node [below=-.1cm,midway,rotate=-40] {$\bm{\alpha}^{\text{r}}$};

\end{tikzpicture}
  \caption{SC-based decoding on a binary tree for $\mathcal{P}(8,4)$ and $\mathbf{v} = \{7,6,5,3,4,2,1,0\}$ ($\mathbf{s} = \{0,0,0,1,0,1,1,1\}$).}
  \label{fig:SCDec}
\end{figure}

The $T=2^{t}$ elements of the left child node $\bm{\alpha}^\ell = \{\alpha^\ell_0,\alpha^\ell_1,\ldots,\alpha^\ell_{T-1}\}$ can be computed by the $F_t$ function, and those of the right child node $\bm{\alpha}^\text{r} = \{\alpha^\text{r}_0,\alpha^\text{r}_1,\ldots,\alpha^\text{r}_{T-1}\}$ can be computed by the $G_t$ function as
\begin{align}
\alpha^{\ell}_i =& F_t(\alpha_i,\alpha_{i+T}) \text{,} \label{eq:Ffunc1} \\
\alpha^{\text{r}}_i =& G_t(\alpha_i,\alpha_{i+T},\beta^\ell_i) \text{,} \label{eq:Gfunc1}
\end{align}
where
\begin{align}
F_t(a,b) =& 2\arctanh \left(\tanh\left(\frac{a}{2}\right)\tanh\left(\frac{b}{2}\right)\right) \text{,} \label{eq:Ffunc2} \\
\approx& \sgn(a)\sgn(b)\min(|a|,|b|) \text{,} \label{eq:Ffunc2HW} \\
G_t(a,b,c) =& b + \left(1-2c\right)a \text{.} \label{eq:Gfunc2}
\end{align}
Assume that the vector of relative reliabilities of bit-channels $\mathbf{v}$ is stored in memory and is available to the decoder. In SC and SCF decoding algorithms, when a leaf node is reached, the $i$-th bit $\hat{u}_i$ can be estimated as
\begin{equation}
\hat{u}_i =
  \begin{cases}
    0 \text{,} & \mbox{if } v_i \geq K \mbox{ or } \alpha_{i}\geq 0\text{,}\\
    1 \text{,} & \mbox{if } v_i < K \mbox{ and } \alpha_{i}< 0\text{,}
  \end{cases} \label{eq:leafSC}
\end{equation}
while in SCL decoding, at a leaf node we have
\begin{equation}
\hat{u}_i =
  \begin{cases}
    0 \text{,} & \mbox{if } v_i \geq K \text{,}\\
    0 \mbox{ and } 1 \text{,} & \mbox{if } v_i < K \text{.}
  \end{cases} \label{eq:leafSCL}
\end{equation}
As can be seen in (\ref{eq:leafSCL}), when an information bit is reached in SCL decoding, both of its possible values of $0$ and $1$ are considered. In order to limit the exponential growth in the complexity of the SCL decoder, at each bit estimation, only $L$ candidates are allowed to survive with the help of a path metric (PM) \cite{balatsoukas}. To this end, a sorter module is used to rank the PMs of the $2L$ generated candidates and selecting $L$ of them with the best PMs. After the estimation of bits by (\ref{eq:leafSC}) or (\ref{eq:leafSCL}), the left child and right child node messages $\bm{\beta}^\ell = \{\beta^\ell_0,\beta^\ell_1,\ldots,\beta^\ell_{T-1}\}$ and $\bm{\beta}^\text{r} = \{\beta^\text{r}_0,\beta^\text{r}_1,\ldots,\beta^\text{r}_{T-1}\}$ are used successively to calculate the $2T$ values of $\bm{\beta}$ as \cite{arikan}
\begin{equation}
\beta_i =
  \begin{cases}
    \beta^\ell_i\oplus \beta^\text{r}_i \text{,} & \text{if} \quad i < T \text{,}\\
    \beta^\text{r}_{i-T} \text{,} & \text{otherwise} \text{,}
  \end{cases}
  \label{eq:beta}
\end{equation}
where $\oplus$ is the bitwise XOR operation.

The depth-first binary tree search of SC-based decoding algorithms can be represented by a list of operations. Let $\mathbf{b}^i=\{b_{n-1}^i,b_{n-2}^i,\ldots,b_0^i\}$ represent the binary expansion of the integer $i$. The LLR value associated with $u_i$ can be calculated by a set of $F_t$ and $G_t$ operations as \cite{leroux}:
\begin{equation}
\begin{cases}
    F_t \text{,} & \mbox{if } b_t^{i}=0 \text{,}\\
    G_t \text{,} & \mbox{if } b_t^{i}=1 \text{.}
\end{cases} \label{eq:instFG}
\end{equation}
For example, the LLR value associated with $u_0$ in Fig.~\ref{fig:SCDec} can be calculated by performing $F_2$, $F_1$, and $F_0$, respectively, and the LLR value associated with $u_1$ in Fig.~\ref{fig:SCDec} can be calculated by performing $F_2$, $F_1$, and $G_0$, respectively. However, the calculation of the LLR value for $u_1$ can use the already calculated $F_2$ and $F_1$ operations in $u_0$. Let $M$ denote the minimum index in $\mathbf{b}^i$ such that $b_M^i = 1$. It is only required to perform $F_t$ or $G_t$ operations with $t\leq M$ because for $t>M$, the LLR values are already calculated for previous bits. For example, the list of operations associated with the SC-based decoder of Fig.~\ref{fig:SCDec} can be represented as $\{F_2,F_1,F_0,G_0,G_1,F_0,G_0,G_2,F_1,F_0,G_0,G_1,F_0,G_0\}$. It should be noted that since the hard estimate operations of (\ref{eq:leafSC}), (\ref{eq:leafSCL}), and (\ref{eq:beta}) are performed right after $F_t$ or $G_t$ functions at a leaf node and in the same time step, we do not include them in the list of operations. The list of operations for SC-based decoders can be generated directly on hardware by simple bitwise operations \cite{leroux,balatsoukas}.

It is worth mentioning that the list of operations for SC-based decoders is fixed for all rates and thus SC-based decoders are rate-flexible. However, the number of time steps required to finish the decoding process in SC-based decoders is at least $2N-2$\footnote{For SCL decoder, $K$ more time steps are needed to perform the PM computation and path pruning \cite{balatsoukas}. For SCF decoder, additional rounds of SC decoding add to the number of required time steps \cite{afisiadis}.}. This limits the latency and throughput of polar codes when decoded by SC-based decoders.

\subsection{Fast SC-Based Decoding} \label{sec:prel:FPDec}

In order to reduce the latency and increase the throughput of SC-based decoders for polar codes, special node structures are identified and the decoding is performed based on the LLR values at the intermediate levels in the SC-based decoding tree without the need of traversing it. It was shown in \cite{alamdar,sarkis} that four special nodes can be decoded efficiently in fast simplified SC (Fast-SSC) decoding without traversing the tree at the special nodes. Let $\mathbf{v}_t = \{v_{t_0},v_{t_1},\ldots,v_{t_{T-1}}\}$ represent a subset of $\mathbf{v}$ and $\mathbf{s}_t = \{s_{t_0},s_{t_1},\ldots,s_{t_{T-1}}\}$ represent a subset of $\mathbf{s}$ corresponding to a node of length $T$ in a polar code decoding tree. The four special nodes are:
\begin{itemize}
\item \emph{Rate-0 Node}: This node consists of only frozen bits, i.e., $v_{t_{i}} \geq K$ for any $i \in \{0,1,\ldots,T-1\}$ ($\mathbf{s}_t = \{0,0,\ldots,0\}$).
\item \emph{Rate-1 Node}: This node consists of only information bits, i.e., $v_{t_{i}} < K$ for any $i \in \{0,1,\ldots,T-1\}$ ($\mathbf{s}_t = \{1,1,\ldots,1\}$).
\item \emph{Repetition (Rep) Node}: This node consists of frozen bits except for the last bit which is an information bit, i.e., $v_{t_{T-1}} < K$ and $v_{t_{i}} \geq K$ for any $i \in \{0,1,\ldots,T-2\}$ ($\mathbf{s}_t = \{0,\ldots,0,0,1\}$).
\item \emph{Single parity-check (SPC) Node}: This node consists of information bits except for the first bit which is a frozen bit, i.e., $v_{t_{0}} \geq K$ and $v_{t_{i}} < K$ for any $i \in \{1,2,\ldots,T-1\}$ ($\mathbf{s}_t = \{0,1,1,\ldots,1\}$).
\end{itemize}
It was shown in \cite{hashemi_SSCL_TCASI,hashemi_FSSCL_TSP} that these nodes can be decoded efficiently also in simplified SCL (SSCL), SSCL-SPC, fast SSCL (Fast-SSCL), and Fast-SSCL-SPC decoding without the need for traversing the tree. This is performed by estimating bits one by one at an intermediate level of the decoding tree, thus generating only $2L$ candidates and selecting the best $L$ from them, similar to the conventional SCL decoding process. This guarantees that the sorter module which selects the $L$ candidates out of $2L$ remains the same as the conventional SCL decoder. The method was also applied to the SCF decoder which resulted in the Fast-SSCF decoder in \cite{giardFlip}. Recently, five new special nodes are observed in \cite{hanif} and efficient decoders that can be used in SC decoding were designed for them. These nodes are:
\begin{itemize}
\item \emph{Type-I Node}: This node consists of frozen bits except for the last two bits which are information bits, i.e., $v_{t_{T-1}} < K$, $v_{t_{T-2}} < K$, and $v_{t_{i}} \geq K$ for any $i \in \{0,1,\ldots,T-3\}$ ($\mathbf{s}_t = \{0,\ldots,0,1,1\}$).
\item \emph{Type-II Node}: This node consists of frozen bits except for the last three bits which are information bits, i.e., $v_{t_{T-1}} < K$, $v_{t_{T-2}} < K$, $v_{t_{T-3}} < K$, and $v_{t_{i}} \geq K$ for any $i \in \{0,1,\ldots,T-4\}$ ($\mathbf{s}_t = \{0,\ldots,0,1,1,1\}$).
\item \emph{Type-III Node}: This node consists of information bits except for the first two bits which are frozen bits, i.e., $v_{t_{0}} \geq K$, $v_{t_{1}} \geq K$, and $v_{t_{i}} < K$ for any $i \in \{2,3,\ldots,T-1\}$ ($\mathbf{s}_t = \{0,0,1,\ldots,1\}$).
\item \emph{Type-IV Node}: This node consists of information bits except for the first three bits which are frozen bits, i.e., $v_{t_{0}} \geq K$, $v_{t_{1}} \geq K$, $v_{t_{2}} \geq K$, and $v_{t_{i}} < K$ for any $i \in \{3,4,\ldots,T-1\}$ ($\mathbf{s}_t = \{0,0,0,1,\ldots,1\}$).
\item \emph{Type-V Node}: This node consists of frozen bits except for the bits $T-5$, $T-3$, $T-2$, and $T-1$ which are information bits, i.e., $v_{t_{T-1}} < K$, $v_{t_{T-2}} < K$, $v_{t_{T-3}} < K$, $v_{t_{T-4}} \geq K$, $v_{t_{T-5}} < K$, and $v_{t_{i}} \geq K$ for any $i \in \{0,1,\ldots,T-6\}$ ($\mathbf{s}_t = \{0,\ldots,0,1,0,1,1,1\}$).
\end{itemize}
It was shown in \cite{hanif_SCL} that these new nodes can be decoded efficiently to improve the speed of SCL decoding. However, the drawback of using these new nodes when implementing the decoder on hardware is that these nodes are based on multiple bit estimations at a time, thus producing more than $2L$ candidates in each decoding step. Therefore, a large sorter is required to select the final $L$ caldidates which adversely affects the hardware implementation complexity. In particular, at each decoding step, Type-I node produces $4L$ candidates to account for all the cases for its two information bits, Type-II node produces $8L$ candidates to account for all the cases for its three information bits, and Type-V node produces $16L$ candidates to account for all the cases for its four information bits. Moreover, Type-III node is decoded using two parallel SPC node decoders, and Type-IV node starts by decoding a Rep node of length four followed by four parallel SPC node decoders \cite{hanif_SCL}.

\begin{figure}
  \centering
  \begin{tikzpicture}[scale=1.9, thick]
\newcommand\Triangle[1]{-- ++(0:2*#1) -- ++(120:2*#1) --cycle}
\newcommand\Square[1]{+(-#1,-#1) rectangle +(#1,#1)}

  \draw (0,0) circle [radius=.05];
  
  \draw (-.05,0) -- (.05,0);
  \draw (0,-.05) -- (0,.05);

  \draw (-1.05,-.55) \Triangle{.05};
  \draw (1,-.5) \Square{.05};

  \draw (0,-.05) -- (-1,-.45);
  \draw (0,-.05) -- (1,-.45);

  \draw [very thin,gray,dashed] (-2,0) -- (2,0);
  \draw [very thin,gray,dashed] (-2,-.5) -- (2,-.5);

  \node at (-2.25,0) {$t=3$};
  \node at (-2.25,-.5) {$t=2$};

  \node at (0,.25) {Type-V};
  \node at (-1,-.75) {Rep};
  \node at (1,-.75) {SPC};

  \draw [->] (-.12,-.05) -- (-1,-.4) node [above=-.1cm,midway,rotate=25] {$\bm{\alpha}$};
  \draw [->] (-.88,-.45) -- (0,-.1) node [below=-.1cm,midway,rotate=25] {$\bm{\beta}$};

\end{tikzpicture}
  \caption{Fast SC-based decoding on a binary tree for $\mathcal{P}(8,4)$ and $\mathbf{v} = \{7,6,5,3,4,2,1,0\}$ ($\mathbf{s} = \{0,0,0,1,0,1,1,1\}$).}
  \label{fig:Fast-SSCDec}
\end{figure}

The pruned decoding tree for the same example as in Fig.~\ref{fig:SCDec} is shown in Fig.~\ref{fig:Fast-SSCDec}. If the new nodes are not taken into account, $\mathcal{P}(8,4)$ can be decoded in four time steps by traversing the tree for one level and decode the resulting Rep and SPC nodes. The resulting list of operations for the decoder would be $\{F_2,\text{Rep}_2,G_2,\text{SPC}_2\}$, where $\text{Rep}_t$ and $\text{SPC}_t$ represent the decoding of Rep and SPC nodes of length $T=2^t$, respectively. However, by considering the new nodes, the decoder can immediately decode the received vector by decoding the Type-V node. The corresponding list of operations would be $\{\text{Type-V}_3\}$, where $\text{Type-V}_t$ represents the decoding of Type-V nodes of length $T=2^t$. The operations which are performed in fast SC-based decoders are summarized in Table~\ref{tab:Oper}. Note that $F_t$ and $G_t$ operations are common between conventional SC-based and fast SC-based decoding algorithms. In the hardware implementation of fast SC-based decoders, this list of operations is stored in memory and is fed into the decoder to perform decoding \cite{sarkis,hashemi_SSCL_TCASI,hashemi_FSSCL_TSP}.

\begin{table}[t!]
\centering
\caption{Different operations that are supported in SC-based decoding algorithms.}
\label{tab:Oper}
\setlength{\extrarowheight}{2.5pt}
\begin{tabular}{lll}
\toprule

Operation & Description & Decoder \\

\midrule

$F_t$ & Calculate $\bm{\alpha}^\ell$ at level $t$. & SC-based \\
$G_t$ & Calculate $\bm{\alpha}^\text{r}$ at level $t$. & SC-based \\
$\text{Rate-0}_t$ & Decode Rate-0 node of length $2^t$. & Fast SC-based \\
$\text{Rate-1}_t$ & Decode Rate-1 node of length $2^t$. & Fast SC-based \\
$\text{Rep}_t$ & Decode Rep node of length $2^t$. & Fast SC-based \\
$\text{SPC}_t$ & Decode SPC node of length $2^t$. & Fast SC-based \\
$\text{Type-I}_t$ & Decode Type-I node of length $2^t$. & Fast SC-based \\
$\text{Type-II}_t$ & Decode Type-II node of length $2^t$. & Fast SC-based \\
$\text{Type-III}_t$ & Decode Type-III node of length $2^t$. & Fast SC-based \\
$\text{Type-IV}_t$ & Decode Type-IV node of length $2^t$. & Fast SC-based \\
$\text{Type-V}_t$ & Decode Type-V node of length $2^t$. & Fast SC-based \\

\bottomrule
\end{tabular}
\end{table}

Let us consider the example in Fig.~\ref{fig:Fast-SSCDec}. If the rate of the code changes from $1/2$ to $5/8$, the list of operations also changes as shown in Fig.~\ref{fig:Fast-SSCDec58}. Without using the new nodes, the list of operations becomes $\{F_2,\text{Rep}_2,G_2,\text{Rate-1}_2\}$, and by considering the new nodes it becomes $\{\text{Type-IV}_3\}$. Therefore, as the rate changes, the list of operations changes. The resulting decoder is therefore not rate-flexible. For applications that support codes with multiple rates, for each rate, the list of operations has to be stored in memory to make the decoder flexible. However, this results in high memory usage when implemented on hardware.

\begin{figure}
  \centering
  \begin{tikzpicture}[scale=1.9, thick]
\newcommand\Triangle[1]{-- ++(0:2*#1) -- ++(120:2*#1) --cycle}
\newcommand\Square[1]{+(-#1,-#1) rectangle +(#1,#1)}

  \draw (0,0) circle [radius=.05];
  
  \draw (-.05,0) -- (.05,0);
  \draw (0,-.05) -- (0,.05);

  \draw (-1.05,-.55) \Triangle{.05};
  \fill (1,-.5) circle [radius=.05];

  \draw (0,-.05) -- (-1,-.45);
  \draw (0,-.05) -- (1,-.45);

  \draw [very thin,gray,dashed] (-2,0) -- (2,0);
  \draw [very thin,gray,dashed] (-2,-.5) -- (2,-.5);

  \node at (-2.25,0) {$t=3$};
  \node at (-2.25,-.5) {$t=2$};

  \node at (0,.25) {Type-IV};
  \node at (-1,-.75) {Rep};
  \node at (1,-.75) {Rate-1};

  \draw [->] (-.12,-.05) -- (-1,-.4) node [above=-.1cm,midway,rotate=25] {$\bm{\alpha}$};
  \draw [->] (-.88,-.45) -- (0,-.1) node [below=-.1cm,midway,rotate=25] {$\bm{\beta}$};

\end{tikzpicture}
  \caption{Fast SC-based decoding on a binary tree for $\mathcal{P}(8,5)$ and $\mathbf{v} = \{7,6,5,3,4,2,1,0\}$ ($\mathbf{s} = \{0,0,0,1,1,1,1,1\}$).}
  \label{fig:Fast-SSCDec58}
\end{figure}

\section{Rate-Flexible Fast Polar Decoding} \label{sec:RFFPD}

The high memory usage of storing the list of operations can be mitigated by generating the list of operations on hardware as the decoding proceeds. A rudimentary approach would be to generate the vector $\mathbf{s}_t$ from $K$ and the vector $\mathbf{v}_t$ using comparators, and check the pattern of information and frozen bits in $\mathbf{s}_t$ for every encountered node. This is shown in Fig.~\ref{fig:logic-nodes} for determining Rate-0, Rate-1, Rep, and SPC nodes of length $8$. It should be noted that the comparators in Fig.~\ref{fig:logic-nodes:a} have two inputs $A$ and $B$, and an output $C$ where
\begin{equation}
C =
  \begin{cases}
    0 \text{,} & \text{if } A \geq B \text{,}\\
    1 \text{,} & \text{if } A < B \text{.}
  \end{cases}
  \label{eq:comparator}
\end{equation}
The problem with this approach is that for nodes of large length, there is a high hardware complexity overhead in generating $\mathbf{s}_t$ from $K$ and $\mathbf{v}_t$, and determining the node types. Moreover, the module that generates the list of operations should account for the largest possible node which is the root node in the decoding tree with size $N$. This results in a large critical path which limits the operating frequency.

\begin{figure*}
\begin{subfigure}[]{\textwidth}
  \centering
  \begin{tikzpicture}[circuit logic US, tiny circuit symbols, scale=.75, thick]

\foreach \x in {0,1.5,...,10.5}
{
\draw (\x,0) -- ++(.75,-.75) -- ++(0,-.5) -- ++(-.75,-.75) -- ++(0,.75) -- ++(.25,.25) -- ++(-.25,.25) -- ++(0,.75);

\node (A) at (\x+.15,-.5) {\tiny $A$};
\node (B) at (\x+.15,-1.5) {\tiny $B$};
\node (C) at (\x+.6,-1) {\tiny $C$};

\draw (\x,-.5) -- ++(-.25,0) -- ++(0,.75);
\draw (\x,-1.5) -- ++(-.25,0) -- ++(0,-.75);
\draw (\x+.75,-1) -- ++(.25,0) -- ++(0,-1.5);
}

\draw (-.75,-2.25) -- (10.25,-2.25);

\node at (-1,-2.25) {\tiny $K$};

\node at (-.25,.5) {\tiny $v_{t_0}$};
\node at (1.25,.5) {\tiny $v_{t_1}$};
\node at (2.75,.5) {\tiny $v_{t_2}$};
\node at (4.25,.5) {\tiny $v_{t_3}$};
\node at (5.75,.5) {\tiny $v_{t_4}$};
\node at (7.25,.5) {\tiny $v_{t_5}$};
\node at (8.75,.5) {\tiny $v_{t_6}$};
\node at (10.25,.5) {\tiny $v_{t_7}$};

\node at (1,-2.75) {\tiny $s_{t_0}$};
\node at (2.5,-2.75) {\tiny $s_{t_1}$};
\node at (4,-2.75) {\tiny $s_{t_2}$};
\node at (5.5,-2.75) {\tiny $s_{t_3}$};
\node at (7,-2.75) {\tiny $s_{t_4}$};
\node at (8.5,-2.75) {\tiny $s_{t_5}$};
\node at (10,-2.75) {\tiny $s_{t_6}$};
\node at (11.5,-2.75) {\tiny $s_{t_7}$};

\end{tikzpicture}
  \caption{}
  \label{fig:logic-nodes:a}
\end{subfigure}
\begin{subfigure}[]{.24\textwidth}
  \centering
  \begin{tikzpicture}[circuit logic US, tiny circuit symbols, scale=1, thick]
    \node [and gate, inputs=ii] (and1) at (0,0) {};
    \node [and gate, inputs=ii] (and2) at (0,-.75) {};
    \node [and gate, inputs=ii] (and3) at (0,-1.5) {};
    \node [and gate, inputs=ii] (and4) at (0,-2.25) {};
    \node [and gate, inputs=nn] (and5) at (.75,-.375) {};
    \node [and gate, inputs=nn] (and6) at (.75,-1.875) {};
    \node [and gate, inputs=nn] (and7) at (1.5,-1.125) {};

\node (s0) at ([xshift=-4mm]and1.input 1) {\tiny $s_{t_0}$};
\node (s1) at ([xshift=-4mm]and1.input 2) {\tiny $s_{t_1}$};
\node (s2) at ([xshift=-4mm]and2.input 1) {\tiny $s_{t_2}$};
\node (s3) at ([xshift=-4mm]and2.input 2) {\tiny $s_{t_3}$};
\node (s4) at ([xshift=-4mm]and3.input 1) {\tiny $s_{t_4}$};
\node (s5) at ([xshift=-4mm]and3.input 2) {\tiny $s_{t_5}$};
\node (s6) at ([xshift=-4mm]and4.input 1) {\tiny $s_{t_6}$};
\node (s7) at ([xshift=-4mm]and4.input 2) {\tiny $s_{t_7}$};
\draw (and1.input 1) -- (s0);
\draw (and1.input 2) -- (s1);
\draw (and2.input 1) -- (s2);
\draw (and2.input 2) -- (s3);
\draw (and3.input 1) -- (s4);
\draw (and3.input 2) -- (s5);
\draw (and4.input 1) -- (s6);
\draw (and4.input 2) -- (s7);
\draw (and1.output) --++(0:.75mm) |- (and5.input 1);
\draw (and2.output) --++(0:.75mm) |- (and5.input 2);
\draw (and3.output) --++(0:.75mm) |- (and6.input 1);
\draw (and4.output) --++(0:.75mm) |- (and6.input 2);
\draw (and5.output) --++(0:.75mm) |- (and7.input 1);
\draw (and6.output) --++(0:.75mm) |- (and7.input 2);
\node (out) at ([xshift=7.5mm]and7.output) {Rate-0};
\draw (and7.output) -- (out);
\end{tikzpicture}
  \caption{}
\end{subfigure}
\begin{subfigure}[]{.24\textwidth}
  \centering
  \begin{tikzpicture}[circuit logic US, tiny circuit symbols, scale=1, thick]
    \node [and gate, inputs=nn] (and1) at (0,0) {};
    \node [and gate, inputs=nn] (and2) at (0,-.75) {};
    \node [and gate, inputs=nn] (and3) at (0,-1.5) {};
    \node [and gate, inputs=nn] (and4) at (0,-2.25) {};
    \node [and gate, inputs=nn] (and5) at (.75,-.375) {};
    \node [and gate, inputs=nn] (and6) at (.75,-1.875) {};
    \node [and gate, inputs=nn] (and7) at (1.5,-1.125) {};

\node (s0) at ([xshift=-5mm]and1.input 1) {\tiny $s_{t_0}$};
\node (s1) at ([xshift=-5mm]and1.input 2) {\tiny $s_{t_1}$};
\node (s2) at ([xshift=-5mm]and2.input 1) {\tiny $s_{t_2}$};
\node (s3) at ([xshift=-5mm]and2.input 2) {\tiny $s_{t_3}$};
\node (s4) at ([xshift=-5mm]and3.input 1) {\tiny $s_{t_4}$};
\node (s5) at ([xshift=-5mm]and3.input 2) {\tiny $s_{t_5}$};
\node (s6) at ([xshift=-5mm]and4.input 1) {\tiny $s_{t_6}$};
\node (s7) at ([xshift=-5mm]and4.input 2) {\tiny $s_{t_7}$};
\draw (and1.input 1) -- (s0);
\draw (and1.input 2) -- (s1);
\draw (and2.input 1) -- (s2);
\draw (and2.input 2) -- (s3);
\draw (and3.input 1) -- (s4);
\draw (and3.input 2) -- (s5);
\draw (and4.input 1) -- (s6);
\draw (and4.input 2) -- (s7);
\draw (and1.output) --++(0:.75mm) |- (and5.input 1);
\draw (and2.output) --++(0:.75mm) |- (and5.input 2);
\draw (and3.output) --++(0:.75mm) |- (and6.input 1);
\draw (and4.output) --++(0:.75mm) |- (and6.input 2);
\draw (and5.output) --++(0:.75mm) |- (and7.input 1);
\draw (and6.output) --++(0:.75mm) |- (and7.input 2);
\node (out) at ([xshift=7.5mm]and7.output) {Rate-1};
\draw (and7.output) -- (out);
\end{tikzpicture}
  \caption{}
\end{subfigure}
\begin{subfigure}[]{.24\textwidth}
  \centering
  \begin{tikzpicture}[circuit logic US, tiny circuit symbols, scale=1, thick]
    \node [and gate, inputs=ii] (and1) at (0,0) {};
    \node [and gate, inputs=ii] (and2) at (0,-.75) {};
    \node [and gate, inputs=ii] (and3) at (0,-1.5) {};
    \node [and gate, inputs=in] (and4) at (0,-2.25) {};
    \node [and gate, inputs=nn] (and5) at (.75,-.375) {};
    \node [and gate, inputs=nn] (and6) at (.75,-1.875) {};
    \node [and gate, inputs=nn] (and7) at (1.5,-1.125) {};

\node (s0) at ([xshift=-4mm]and1.input 1) {\tiny $s_{t_0}$};
\node (s1) at ([xshift=-4mm]and1.input 2) {\tiny $s_{t_1}$};
\node (s2) at ([xshift=-4mm]and2.input 1) {\tiny $s_{t_2}$};
\node (s3) at ([xshift=-4mm]and2.input 2) {\tiny $s_{t_3}$};
\node (s4) at ([xshift=-4mm]and3.input 1) {\tiny $s_{t_4}$};
\node (s5) at ([xshift=-4mm]and3.input 2) {\tiny $s_{t_5}$};
\node (s6) at ([xshift=-4mm]and4.input 1) {\tiny $s_{t_6}$};
\node (s7) at ([xshift=-5mm]and4.input 2) {\tiny $s_{t_7}$};
\draw (and1.input 1) -- (s0);
\draw (and1.input 2) -- (s1);
\draw (and2.input 1) -- (s2);
\draw (and2.input 2) -- (s3);
\draw (and3.input 1) -- (s4);
\draw (and3.input 2) -- (s5);
\draw (and4.input 1) -- (s6);
\draw (and4.input 2) -- (s7);
\draw (and1.output) --++(0:.75mm) |- (and5.input 1);
\draw (and2.output) --++(0:.75mm) |- (and5.input 2);
\draw (and3.output) --++(0:.75mm) |- (and6.input 1);
\draw (and4.output) --++(0:.75mm) |- (and6.input 2);
\draw (and5.output) --++(0:.75mm) |- (and7.input 1);
\draw (and6.output) --++(0:.75mm) |- (and7.input 2);
\node (out) at ([xshift=7.5mm]and7.output) {Rep};
\draw (and7.output) -- (out);
\end{tikzpicture}
  \caption{}
\end{subfigure}
\begin{subfigure}[]{.24\textwidth}
  \centering
  \begin{tikzpicture}[circuit logic US, tiny circuit symbols, scale=1, thick]
    \node [and gate, inputs=in] (and1) at (0,0) {};
    \node [and gate, inputs=nn] (and2) at (0,-.75) {};
    \node [and gate, inputs=nn] (and3) at (0,-1.5) {};
    \node [and gate, inputs=nn] (and4) at (0,-2.25) {};
    \node [and gate, inputs=nn] (and5) at (.75,-.375) {};
    \node [and gate, inputs=nn] (and6) at (.75,-1.875) {};
    \node [and gate, inputs=nn] (and7) at (1.5,-1.125) {};

\node (s0) at ([xshift=-4mm]and1.input 1) {\tiny $s_{t_0}$};
\node (s1) at ([xshift=-5mm]and1.input 2) {\tiny $s_{t_1}$};
\node (s2) at ([xshift=-5mm]and2.input 1) {\tiny $s_{t_2}$};
\node (s3) at ([xshift=-5mm]and2.input 2) {\tiny $s_{t_3}$};
\node (s4) at ([xshift=-5mm]and3.input 1) {\tiny $s_{t_4}$};
\node (s5) at ([xshift=-5mm]and3.input 2) {\tiny $s_{t_5}$};
\node (s6) at ([xshift=-5mm]and4.input 1) {\tiny $s_{t_6}$};
\node (s7) at ([xshift=-5mm]and4.input 2) {\tiny $s_{t_7}$};
\draw (and1.input 1) -- (s0);
\draw (and1.input 2) -- (s1);
\draw (and2.input 1) -- (s2);
\draw (and2.input 2) -- (s3);
\draw (and3.input 1) -- (s4);
\draw (and3.input 2) -- (s5);
\draw (and4.input 1) -- (s6);
\draw (and4.input 2) -- (s7);
\draw (and1.output) --++(0:.75mm) |- (and5.input 1);
\draw (and2.output) --++(0:.75mm) |- (and5.input 2);
\draw (and3.output) --++(0:.75mm) |- (and6.input 1);
\draw (and4.output) --++(0:.75mm) |- (and6.input 2);
\draw (and5.output) --++(0:.75mm) |- (and7.input 1);
\draw (and6.output) --++(0:.75mm) |- (and7.input 2);
\node (out) at ([xshift=7.5mm]and7.output) {SPC};
\draw (and7.output) -- (out);
\end{tikzpicture}
  \caption{}
\end{subfigure}
  \caption{Determination of node types for fast SC-based decoding in a node of length $T=8$. (a) generation of $\mathbf{s}_t$ from $K$ and $\mathbf{v}_t$, (b) Rate-0 node, (c) Rate-1 node, (d) Rep node, (e) SPC node.}
  \label{fig:logic-nodes}
\end{figure*}
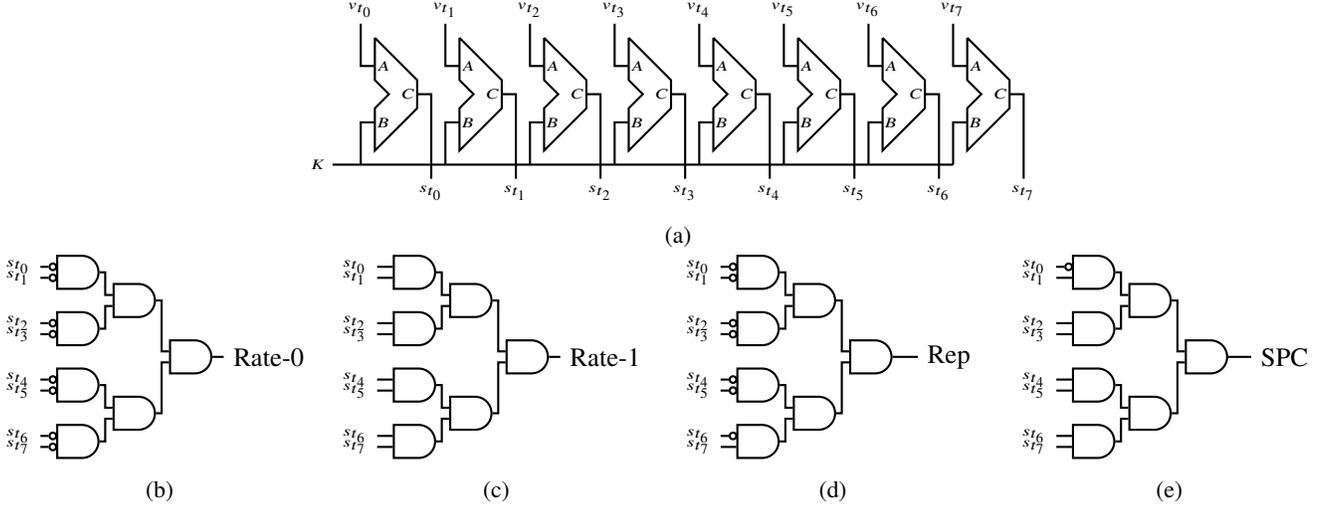

In order to tackle the above issue, the idea is to exploit the inherent order in the Bhattacharyya parameters of the bit-channels. Let $W_i$ and $W_j$ be the bit-channels corresponding to $u_i$ and $u_j$, and let $\mathbf{b}^i$ and $\mathbf{b}^j$ be the binary expansions of the integers $i$ and $j$. In \cite{Sc16,BDOT16} a partial order between the polarized bit-channels was introduced. In particular, it was proven that $W_i$ is stochastically degraded with respect to $W_j$, i.e., $W_i \prec W_j$, when one of the following two properties hold:

\begin{itemize}
\item \emph{Addition Property} \cite{mondelliUPO}: There exists $k\in \{0, 1, \ldots, n-1\}$ such that
\begin{equation}\label{eq:add}
  \begin{cases}
    b_m^i = b_m^j \text{,} & \text{if} \quad m \neq k \text{,}\\
    b_k^i = 0 \text{,}\\
    b_k^j = 1 \text{.}\\
  \end{cases}
\end{equation}
\item \emph{Left-Swap Property}\cite{mondelliUPO}: There exist $k, l\in \{0, 1, \ldots, n-1\}$ such that $l < k$ and
\begin{equation}\label{eq:leftswap}
  \begin{cases}
    b_m^i = b_m^j \text{,} & \text{if} \quad m \neq k \mbox{, } m \neq l \text{,}\\
    b_k^i = b_l^j = 0 \text{,}\\
    b_l^i = b_k^j = 1 \text{.}\\
  \end{cases}
\end{equation}
\end{itemize}
Recall that, if $W_i \prec W_j$, then all the reliability measures of $W_i$ are worse than those of $W_j$, i.e., $W_i$ has smaller mutual information, larger Bhattacharyya parameter, and larger error probability. Consequently, if $u_j$ belongs to the frozen set, then also $u_i$ belongs to the frozen set. Furthermore, if $u_i$ belongs to the information set, then also $u_j$ belongs to the information set. By using the two properties above, it was shown in \cite{mondelliUPO} that it suffices to compute the reliability of a sublinear fraction of channels in order to identify the frozen and the information sets.

Another option to find an ordering between the Bhattacharyya parameters of the bit-channels can be described as follows. Consider the transmission over a BMS channel $W$ with Bhattacharyya parameter $Z(W)$ and define the synthetic channels $W^0$ and $W^1$ as 
\begin{equation}
\begin{split}
W^0(y_1, y_2\mid x_1) & = \sum_{x_2} \frac{1}{2}W(y_1\mid x_1 \oplus x_2) W(y_2\mid x_2),\\
W^1(y_1, y_2, x_1\mid x_2) & = \frac{1}{2}W(y_1\mid x_1 \oplus x_2) W(y_2\mid x_2).\\
\end{split}
\end{equation}
Then, the following inequalities between $Z(W^{0})$, $Z(W^{1})$ and $Z(W)$ hold
\begin{equation}\label{eq:plusB}
\begin{split}
Z(W)\sqrt{2-Z(W)^2}&\le Z(W^0)\le 2Z(W)-Z(W)^2,\\
&Z(W^1)=Z(W)^2,
\end{split}
\end{equation}
which follow from Proposition 5 of \cite{arikan} and from Exercise 4.62 of \cite{RiU08}. 
Furthermore, the bit-channel $W_i$ corresponding to $u_i$ is given by the recursive formula below:
\begin{equation}
W_i = (((W^{b_{n-1}^i})^{b_{n-2}^i})^{\ldots})^{b_{0}^i}.
\end{equation}
In what follows, we will denote by $Z_i$ the Bhattacharyya parameter of $W_i$.

At this point, we are ready to state and prove the first result of this paper, which concerns the identification of Rate-0, Rate-1, Rep, and SPC nodes. 
\begin{theorem}\label{th:th1}
Consider a node of length $T=2^t$ in a polar code of length $N=2^n$. Then, the following properties hold:
\begin{enumerate}
\item If $v_{t_{T-1}} \geq K$, i.e., $s_{t_{T-1}} = 0$, then the node represents a Rate-0 node.
\item If $v_{t_0} < K$, i.e., $s_{t_0} = 1$, then the node represents a Rate-1 node.
\item If $v_{t_{T-1}} < K$ and $v_{t_{T-2}} \geq K$, i.e., $s_{t_{T-1}} = 1$ and $s_{t_{T-2}} = 0$, then the node represents a Rep node.
\item If $v_{t_0} \geq K$ and $v_{t_1} < K$, i.e., $s_{t_0} = 0$ and $s_{t_1} = 1$, then the node represents an SPC node.
\end{enumerate} 
\end{theorem}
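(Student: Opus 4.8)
The plan is to derive all four statements from the universal partial order on the polarized bit-channels recalled in \eqref{eq:add}--\eqref{eq:leftswap}, exploiting the elementary fact that within a node only the $t$ least significant bits of the leaf indices vary. Concretely, I would first make explicit the indices $t_0,\dots,t_{T-1}$ of the leaves of a length-$T$ node at level $t$: by the correspondence \eqref{eq:instFG} between the path to leaf $u_i$ and the binary expansion $\mathbf b^i$ (descending one level selects the corresponding bit, the left branch being $0$ and the right branch $1$), all leaves of the node share the same $n-t$ most significant bits, and the $j$-th leaf $u_{t_j}$ has $(b_{t-1}^{t_j},\dots,b_0^{t_j})$ equal to the $t$-bit binary expansion of $j$. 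In particular, among their $t$ lowest bits: $\mathbf b^{t_0}$ has all of them $0$; $\mathbf b^{t_{T-1}}$ has all of them $1$; $\mathbf b^{t_1}$ has bit $0$ set and bits $1,\dots,t-1$ cleared; $\mathbf b^{t_{T-2}}$ has bit $0$ cleared and bits $1,\dots,t-1$ set. Since any two leaves of the node differ only in positions $<t$, the properties \eqref{eq:add} and \eqref{eq:leftswap} apply to them with flipped/swapped coordinates $k,l<t$.

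The core step is to establish, writing $\mathcal O(i)\subseteq\{0,\dots,t-1\}$ for the set of low positions where $b^i=1$ and using transitivity of stochastic degradation, four domination facts: (i) $W_{t_j}\prec W_{t_{T-1}}$ for every $j\neq T-1$; (ii) $W_{t_0}\prec W_{t_j}$ for every $j\neq 0$; (iii) $W_{t_j}\prec W_{t_{T-2}}$ for every $j\notin\{T-1,T-2\}$; (iv) $W_{t_1}\prec W_{t_j}$ for every $j\notin\{0,1\}$. Facts (i) and (ii) follow from repeated use of the Addition property \eqref{eq:add}: one reaches $\mathbf b^{t_{T-1}}$ from $\mathbf b^{t_j}$ by turning its low $0$-bits into $1$ one at a time, and $\mathbf b^{t_j}$ from $\mathbf b^{t_0}$ similarly. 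For (iii): if $b_0^{t_j}=0$ then $\mathcal O(t_j)\subseteq\{1,\dots,t-1\}=\mathcal O(t_{T-2})$ and an Addition chain gives the claim; if $b_0^{t_j}=1$ then, since $j\neq T-1$, some $k\in\{1,\dots,t-1\}$ has $b_k^{t_j}=0$, so applying the Left-Swap property \eqref{eq:leftswap} with $l=0$ and this $k$ produces a strictly better index with bit $0$ cleared and ones confined to $\{1,\dots,t-1\}$, whence an Addition chain reaches $t_{T-2}$. Fact (iv) is the mirror image of (iii), exchanging $0\leftrightarrow 1$ and "first"$\leftrightarrow$"last".

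The statements then follow from the remark preceding the theorem that $W_i\prec W_j$ implies both "$u_j$ frozen $\Rightarrow u_i$ frozen" and "$u_i$ information $\Rightarrow u_j$ information": if $s_{t_{T-1}}=0$, fact (i) propagates the frozen status of $u_{t_{T-1}}$ to every $u_{t_j}$, so the node is Rate-0; if $s_{t_0}=1$, fact (ii) propagates the information status of $u_{t_0}$ to all, so the node is Rate-1; if $s_{t_{T-1}}=1$ and $s_{t_{T-2}}=0$, fact (iii) together with the hypothesis $s_{t_{T-2}}=0$ forces every $u_{t_j}$ with $j\leq T-2$ to be frozen while $u_{t_{T-1}}$ is information by hypothesis, which is exactly a Rep node; if $s_{t_0}=0$ and $s_{t_1}=1$, fact (iv) together with $s_{t_1}=1$ forces every $u_{t_j}$ with $j\geq 1$ to be information while $u_{t_0}$ is frozen, which is exactly an SPC node.

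I expect the main obstacle to be step (iii)--(iv): seeing that the Left-Swap property lets one relocate the single set bit sitting in the least significant position up to \emph{any} higher position that currently carries a $0$, thereby turning a configuration that is not comparable by bit-inclusion into one that is, after which the Addition property completes the chain. The rest is bookkeeping: tracking binary expansions, checking that at each link of the chains the hypotheses of \eqref{eq:add}/\eqref{eq:leftswap} (all other coordinates equal, the $0$ and $1$ in the prescribed places) hold, and invoking transitivity of degradation.
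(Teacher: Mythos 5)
Your proposal is correct and follows essentially the same route as the paper: identify the binary expansions of the extremal leaf indices within the node, use the addition and left-swap properties of the partial order to show that $W_{t_{T-1}}$ (resp.\ $W_{t_0}$, $W_{t_{T-2}}$, $W_{t_1}$) dominates (resp.\ is dominated by) the relevant leaves, and then propagate the frozen/information status through stochastic degradation. The only difference is that you spell out the chaining of addition steps and the single left-swap in cases (iii)--(iv), which the paper leaves implicit; this is a welcome elaboration, not a different argument.
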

\begin{proof}
\begin{enumerate}
\item Note that $\mathbf{b}^{T-1}=\{1, \ldots, 1\}$. By using the addition property \eqref{eq:add}, we obtain that $W_{i}\prec W_{T-1}$ for any $i\in\{0, 1, \ldots, T-2\}$. Hence, as $v_{t_{T-1}} \geq K$, $v_{t_i} \geq K$ for any $i\in\{0, 1, \ldots, T-2\}$. This means that the polar code consists of only frozen bits, i.e., it is a Rate-0 node.

\item Note that $\mathbf{b}^{0}=\{0, \ldots, 0\}$. By using the addition property \eqref{eq:add}, we obtain that $W_{0}\prec W_{i}$ for any $i\in\{1, 2, \ldots, T-1\}$. Hence, as $v_{t_0} < K$, $v_{t_i} < K$ for any $i\in\{1, 2, \ldots, T-1\}$. This means that the polar code consists of only information bits, i.e., it is a Rate-1 node.

\item Note that $\mathbf{b}^{T-2}=\{1, \ldots, 1, 0\}$. By using the addition property \eqref{eq:add} and the left-swap property \eqref{eq:leftswap}, we obtain that $W_{i}\prec W_{T-2}$ for any $i\in\{0, 1, \ldots, T-3\}$. Hence, as $v_{t_{T-2}} \geq K$, $v_{t_i} \geq K$ for any $i\in\{0, 1, \ldots, T-3\}$. As $v_{t_{T-1}} < K$, the polar code consists of frozen bits except for the last bit which is an information bit, i.e., it is a Rep node.

\item Note that $\mathbf{b}^{1}=\{0, \ldots, 0, 1\}$. By using the addition property \eqref{eq:add} and the left-swap property \eqref{eq:leftswap}, we obtain that $W_{1}\prec W_{i}$ for any $i\in\{2, 3, \ldots, T-1\}$. Hence, as $v_{t_1} < K$, $v_{t_i} < K$ for any $i\in\{2, 3, \ldots, T-1\}$. As $v_{t_0} \geq K$, the polar code consists of information bits except for the first bit which is a frozen bit, i.e., it is an SPC node.
\end{enumerate}
\end{proof}

In the proof of Theorem~\ref{th:th1}, we used the fact that for any node of length $T=2^t$ in a polar code of length $N=2^n$, the $n$-bit binary expansions of the integers corresponding to the bit-channels in the node are equal in the bits $\{n-1,n-2,\ldots,t\}$, and are different in the bits $\{t-1,t-2,\ldots,0\}$. An immediate consequence of Theorem~\ref{th:th1} is that, by checking only one value, we can find out if a constituent node is either a Rate-0 or a Rate-1 node. Furthermore, by checking only two values, we can find out if a constituent node is either a Rep or an SPC node. This observation significantly reduces the hardware complexity associated with the on-line node identification. In addition, the proposed approach is independent of the node length, making it suitable for codes of any length and rate. Fig.~\ref{fig:logicall} shows the circuit required to generate the list of operations on-line for any node of length $T$. It can be seen that the circuit consists of only four comparators, three NOT gates, and two AND gates.

\begin{figure}
  \centering
  \begin{tikzpicture}[circuit logic US, tiny circuit symbols, scale=.75, thick]

\foreach \x in {0,1.5,...,4.5}
{
\draw (\x,0) -- ++(.75,-.75) -- ++(0,-.5) -- ++(-.75,-.75) -- ++(0,.75) -- ++(.25,.25) -- ++(-.25,.25) -- ++(0,.75);

\node (A) at (\x+.15,-.5) {\tiny $A$};
\node (B) at (\x+.15,-1.5) {\tiny $B$};
\node (C) at (\x+.6,-1) {\tiny $C$};

\draw (\x,-.5) -- ++(-.25,0) -- ++(0,.75);
\draw (\x,-1.5) -- ++(-.25,0) -- ++(0,-.75);
\draw (\x+.75,-1) -- ++(.25,0) -- ++(0,-1.5);
}

\draw (-.75,-2.25) -- (4.25,-2.25);

\node at (-1,-2.25) {$K$};

\node at (-.25,.5) {$v_{t_0}$};
\node at (1.25,.5) {$v_{t_1}$};
\node at (2.75,.5) {$v_{t_{T-2}}$};
\node at (4.25,.5) {$v_{t_{T-1}}$};

\node at (1,-2.75) {$s_{t_0}$};
\node at (2.5,-2.75) {$s_{t_1}$};
\node at (4,-2.75) {$s_{t_{T-2}}$};
\node at (5.5,-2.75) {$s_{t_{T-1}}$};

\end{tikzpicture}
  \begin{tikzpicture}[circuit logic US, tiny circuit symbols, scale=1, thick]
    \node [and gate, inputs=in] (and1) at (0,0) {};
    \node [and gate, inputs=in] (and2) at (0,-.75) {};
    \node [not gate] (not1) at ([yshift=-.75cm]and2.input 1) {};

\node (s0) at ([xshift=-2cm,yshift=.75cm]and1.input 2) {$s_{t_0}$};
\node (s1) at ([xshift=-2cm]and1.input 2) {$s_{t_1}$};
\node (s6) at ([xshift=-1.9cm]and2.input 1) {$s_{t_{T-2}}$};
\node (s7) at ([xshift=-1.9cm,yshift=-.75cm]and2.input 1) {$s_{t_{T-1}}$};
\draw (s0) --++(0:1.25cm) |- (and1.input 1);
\draw (s1) -- (and1.input 2);
\draw (s6) -- (and2.input 1);
\draw (s7) --++(0:1.25cm) |- (and2.input 2);
\draw (s7) -- (not1.input);

\node (out-rate1) at ([xshift=4cm]s0) {Rate-1};
\node (out-spc) at ([xshift=1.25cm]and1.output) {SPC};
\node (out-rep) at ([xshift=1.25cm]and2.output) {Rep};
\node (out-rate0) at ([xshift=4cm]s7) {Rate-0};

\draw (s0) -- (out-rate1);
\draw (and1.output) -- (out-spc);
\draw (and2.output) -- (out-rep);
\draw (not1.output) -- (out-rate0);
\end{tikzpicture}
  \caption{Efficient generation of the list of operations on hardware.}
  \label{fig:logicall}
\end{figure}
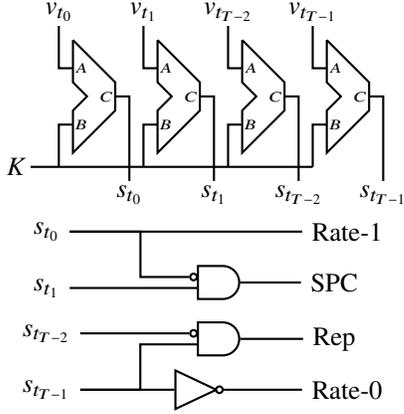

Let us now state and prove the second result of this paper, which concerns the identification of Type-I, Type-II, Type-III, Type-IV, and Type-V nodes.
\begin{theorem}\label{th:th2}
Consider a node of length $T=2^t$ in a polar code of length $N=2^n$. Then, the following properties hold:
\begin{enumerate}
\item If $v_{t_{T-1}} < K$, $v_{t_{T-2}} < K$, and $v_{t_{T-3}} \geq K$, then the node represents a Type-I node.
\item If $v_{t_{T-1}} < K$, $v_{t_{T-2}} < K$, $v_{t_{T-3}} < K$, and $v_{t_{T-5}} \geq K$, then the node represents a Type-II node.
\item If $v_{t_0} \geq K$, $v_{t_1} \geq K$, and $v_{t_2} < K$, then the node represents a Type-III node.
\item If $v_{t_0} \geq K$, $v_{t_1} \geq K$, $v_{t_2} \geq K$, and $v_{t_4} < K$, then the node represents a Type-IV node.
\item If $v_{t_{T-1}} < K$, $v_{t_{T-2}} < K$, $v_{t_{T-3}} < K$, $v_{t_{T-4}} \geq K$, $v_{t_{T-5}} < K$, and $v_{t_{T-9}} \geq K$, then the node represents a Type-V node.
\end{enumerate} 
\end{theorem}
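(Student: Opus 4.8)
The plan is to reuse the structure of the proof of Theorem~\ref{th:th1}. Recall that in a node of length $T=2^t$ the $n$-bit expansions of the constituent bit-channels coincide on bits $\{n-1,\dots,t\}$ and take every value on bits $\{t-1,\dots,0\}$, so it suffices to work with the $t$-bit sub-patterns and to compare the bit-channels named in each hypothesis against the others in the universal partial order $\prec$ generated by the addition property~\eqref{eq:add} and the left-swap property~\eqref{eq:leftswap}; whenever $W_i\prec W_j$, freezing $u_j$ freezes $u_i$ and marking $u_i$ as information marks $u_j$ as information. For the cases where $\prec$ is too coarse I would additionally invoke the quantitative estimates~\eqref{eq:plusB} together with the fact that reliability is ordered by the Bhattacharyya parameter.

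Type-I, Type-III and Type-V go through with the partial order alone. For Type-I, $u_{T-3}$ has $t$-bit pattern $\{1,\dots,1,0,1\}$ (a single $0$, in position $1$); I would show $W_i\prec W_{T-3}$ for every $i\in\{0,\dots,T-4\}$ by the same dichotomy as in Theorem~\ref{th:th1}: if $b^i_1=0$, raise the remaining zeros of $u_i$ via~\eqref{eq:add}; if $b^i_1=1$, then $i\notin\{T-2,T-1\}$ forces a zero in some position $m\ge2$, so move the one from position $1$ to position $m$ via~\eqref{eq:leftswap} and then fill. Hence $v_{t_{T-3}}\ge K$ freezes $u_i$ for all $i\le T-4$, and with $v_{t_{T-1}},v_{t_{T-2}}<K$ the node is Type-I. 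Type-III is the mirror image: $u_2$ has pattern $\{0,\dots,0,1,0\}$ and $W_2\prec W_i$ for all $i\in\{3,\dots,T-1\}$. For Type-V, $u_{T-9}$ has pattern $\{1,\dots,1,0,1,1,1\}$ (a single $0$, in position $3$), giving $W_i\prec W_{T-9}$ for $i\in\{0,\dots,T-10\}$ exactly as above; the three remaining low-index positions are dealt with by $W_{T-8}\prec W_{T-4}$ (one use of~\eqref{eq:add}) and $W_{T-7}\prec W_{T-4}$, $W_{T-6}\prec W_{T-4}$ (one use of~\eqref{eq:leftswap} each), so the hypothesis $v_{t_{T-4}}\ge K$ freezes them too; with the four information-bit hypotheses this yields a Type-V node.

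The real work is Type-II (and, dually, Type-IV), because here $\prec$ is genuinely insufficient: the patterns $\{1,\dots,1,0,0\}$ of $u_{T-4}$ and $\{1,\dots,1,0,1,1\}$ of $u_{T-5}$ are incomparable, so $v_{t_{T-5}}\ge K$ cannot freeze $u_{T-4}$ through the partial order, and likewise positions $3$ and $4$ in Type-IV. I would instead argue with~\eqref{eq:plusB}. The channels $W_{T-4}$ and $W_{T-5}$ agree on bits $\{n-1,\dots,3\}$; let $x$ be the Bhattacharyya parameter of the channel reached after processing those bits, and write $Z^{0}(\cdot)$ for the map $Z(W)\mapsto Z(W^{0})$, which by~\eqref{eq:plusB} is nondecreasing with $x\sqrt{2-x^{2}}\le Z^{0}(x)\le 2x-x^{2}$. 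Using $Z(W^{1})=Z(W)^{2}$ to unwind the last three bits ($1,0,0$ for $u_{T-4}$ and $0,1,1$ for $u_{T-5}$) gives
\[
Z_{T-4}=Z^{0}\!\bigl(Z^{0}(x^{2})\bigr),\qquad Z_{T-5}=\bigl(Z^{0}(x)\bigr)^{4}.
\]
Lower-bounding $Z_{T-4}$ by applying the lower bound on $Z^{0}$ twice, and upper-bounding $Z_{T-5}$ by the upper bound on $Z^{0}$, reduces the claim $Z_{T-4}\ge Z_{T-5}$ to the one-variable inequality
\[
\ell(x)\sqrt{2-\ell(x)^{2}}\;\ge\;(2x-x^{2})^{4},\qquad \ell(x):=x^{2}\sqrt{2-x^{4}},
\]
for all $x\in[0,1]$. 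Since reliability is ordered by the Bhattacharyya parameter, $Z_{T-4}\ge Z_{T-5}$ together with $v_{t_{T-5}}\ge K$ gives $v_{t_{T-4}}\ge K$; combined with $W_i\prec W_{T-5}$ for $i\le T-6$ (the same dichotomy as above) and the three information-bit hypotheses, the node is Type-II. Type-IV is identical with the inequalities reversed: $Z_3=\bigl(Z^{0}(x')\bigr)^{4}\le Z^{0}(Z^{0}((x')^{2}))=Z_4$ for the analogous parameter $x'$, so $v_{t_4}<K$ forces $v_{t_3}<K$, while $W_4\prec W_i$ handles $i\in\{5,\dots,T-1\}$.

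The main obstacle is the last displayed inequality: it holds on $[0,1]$ but is tight at $x=1$, so establishing it cleanly requires either a second-order expansion around the endpoint or an algebraic rearrangement that exposes the slack, whereas elsewhere the margin is comfortable. All the remaining steps are routine bookkeeping with the two order properties, in the spirit of Theorem~\ref{th:th1}.
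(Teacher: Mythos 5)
Your proposal is correct and follows essentially the same route as the paper: the partial order generated by \eqref{eq:add} and \eqref{eq:leftswap} disposes of Types I, III, and V (including the same three comparisons $W_{T-6},W_{T-7},W_{T-8}\prec W_{T-4}$), and your Bhattacharyya reduction for Types II and IV via \eqref{eq:plusB} yields, after expanding $\ell(x)\sqrt{2-\ell(x)^2}=x^2\sqrt{2-x^4}\sqrt{2-x^4(2-x^4)}$, exactly the paper's inequality \eqref{eq:algcond}. The paper likewise dismisses that one-variable inequality as ``easy to check,'' so the obstacle you flag (tightness at $x=1$) leaves you no worse off than the published argument.
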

\begin{proof}
\begin{enumerate}
\item Note that $\mathbf{b}^{T-3}=\{1, \ldots, 1, 0, 1\}$. By using the addition property \eqref{eq:add} and the left-swap property \eqref{eq:leftswap}, we obtain that $W_{i}\prec W_{T-3}$ for any $i\in\{0, 1, \ldots, T-4\}$. Hence, as $v_{t_{T-3}} \geq K$, $v_{t_i} \geq K$ for any $i\in\{0, 1, \ldots, T-4\}$. As $v_{t_{T-1}} < K$ and $v_{t_{T-2}} < K$, the node consists of frozen bits except for the last two bits which are information bits, i.e., it is a Type-I node.

\item Note that $\mathbf{b}^{T-5}=\{1, \ldots, 1, 0, 1, 1\}$. By using the addition property \eqref{eq:add} and the left-swap property \eqref{eq:leftswap}, we obtain that $W_{i}\prec W_{T-5}$ for any $i\in\{0, 1, \ldots, T-6\}$. Hence, as $v_{t_{T-5}} \geq K$, $v_{t_i} \geq K$ for any $i\in\{0, 1, \ldots, T-6\}$. Furthermore, note that $\mathbf{b}^{T-4}=\{1, \ldots, 1, 1, 0, 0\}$. Let $W$ be the transmission channel and let $z$ be the Bhattacharyya parameter of the channel defined as
\begin{equation*}
(((W\overbrace{^{1})^{1})^{\ldots})^{1}}^{t-3 \footnotesize{\mbox{ times}}}.
\end{equation*}
Then, by using \eqref{eq:plusB}, we have that
\begin{equation*}
\begin{split}
Z_{T-5} & \le (2z-z^2)^4,\\
Z_{T-4} & \ge z^2 \sqrt{2-z^4} \sqrt{2-z^4(2-z^4)}. 
\end{split}
\end{equation*}
It is easy to check that, for any $z\in [0, 1]$,
\begin{equation}\label{eq:algcond}
(2z-z^2)^4\le z^2 \sqrt{2-z^4} \sqrt{2-z^4(2-z^4)},
\end{equation}
which implies that
\begin{equation*}
Z_{T-5}\le Z_{T-4}.
\end{equation*}
Consequently, as $v_{t_{T-5}} \geq K$, $v_{t_{T-4}} \geq K$. As a result, since $v_{t_{T-1}} < K$, $v_{t_{T-2}} < K$, and $v_{t_{T-3}} < K$, the node consists of frozen bits except for the last three bits which are information bits, i.e., it is a Type-II node.

\item Note that $\mathbf{b}^{2}=\{0, \ldots, 0, 1, 0\}$. By using the addition property \eqref{eq:add} and the left-swap property \eqref{eq:leftswap}, we obtain that $W_{2}\prec W_{i}$ for any $i\in\{3, 4, \ldots, T-1\}$. Hence, as $v_{t_2} < K$, $v_{t_i} < K$ for any $i\in\{3, 4, \ldots, T-1\}$. As $v_{t_0} \geq K$ and $v_{t_1} \geq K$, the node consists of information bits except for the first two bits which are frozen bits, i.e., it is a Type-III node.

\item Note that $\mathbf{b}^{4}=\{0, \ldots, 0, 1, 0, 0\}$. By using the addition property \eqref{eq:add} and the left-swap property \eqref{eq:leftswap}, we obtain that $W_{4}\prec W_{i}$ for any $i\in\{5, 6, \ldots, T-1\}$. Hence, as $v_{t_4} < K$, $v_{t_i} < K$ for any $i\in\{5, 6, \ldots, T-1\}$. Furthermore, note that $\mathbf{b}^{3}=\{0, \ldots, 0, 0, 1, 1\}$. Let $W$ be the transmission channel and let $z$ be the Bhattacharyya parameter of the channel defined as
\begin{equation*}
(((W\overbrace{^{0})^{0})^{\ldots})^{0}}^{t-3 \footnotesize{\mbox{ times}}}.
\end{equation*}
Then, by using \eqref{eq:plusB}, we have that
\begin{equation*}
\begin{split}
Z_{3} & \le (2z-z^2)^4,\\
Z_{4} & \ge z^2 \sqrt{2-z^4} \sqrt{2-z^4(2-z^4)}. 
\end{split}
\end{equation*}
Since \eqref{eq:algcond} holds for any $z\in [0, 1]$, we obtain that
\begin{equation*}
Z_{3}\le Z_{4}.
\end{equation*}
Consequently, as $v_{t_4} < K$, $v_{t_3} < K$. As a result, since $v_{t_0} \geq K$, $v_{t_1} \geq K$, and $v_{t_2} \geq K$, the node consists of information bits except for the first three bits which are frozen bits, i.e., it is a Type-IV node.

\item Note that $\mathbf{b}^{T-9}=\{1, \ldots, 1, 0, 1, 1\}$. By using the addition property \eqref{eq:add} and the left-swap property \eqref{eq:leftswap}, we obtain that $W_{i}\prec W_{T-9}$ for any $i\in\{0, 1, \ldots, T-10\}$. Hence, as $v_{t_{T-9}} \geq K$, $v_{t_i} \geq K$ for any $i\in\{0, 1, \ldots, T-10\}$. By using again the left-swap property \eqref{eq:leftswap}, we obtain that $W_{T-6}\prec W_{T-4}$ and $W_{T-7}\prec W_{T-4}$. By using again the addition property \eqref{eq:add}, we obtain that $W_{T-8}\prec W_{T-4}$. Hence, as $v_{t_{T-4}} \geq K$, $v_{t_i} \geq K$ for any $i\in\{T-6, T-7, T-8\}$. As a result, since $v_{t_{T-1}} < K$, $v_{t_{T-2}} < K$, $v_{t_{T-3}} < K$, and $v_{t_{T-5}} < K$, the node is a Type-V node.
\end{enumerate}
\end{proof}

The proofs for the identification of Rate-0, Rep, SPC, Rate-1, Type-I, Type-III, and Type-V nodes are based on stochastic degradation arguments. Consequently, these proofs are general and do not depend on the fact that the frozen bits are determined according to the value of the Bhattacharyya parameter. On the contrary, the proofs for Type-II and Type-IV nodes use the inequalities \eqref{eq:plusB} which are valid for Bhattacharyya parameters. However, let us point out that the strategy of the proof (use extremes of information combining bounds such as \eqref{eq:plusB} in order to compare the reliability of specific channels) is general. In order to prove a similar statement for different reliability measures, one would need to find bounds of the form \eqref{eq:plusB} for the desired reliability measure (e.g., mutual information, error probability). Let us further clarify that the proofs for Type-II and Type-IV nodes provide an ordering between the Bhattacharyya parameter of bit-channels. As such, they do not depend on the particular technique used to compute those Bhattacharyya parameters (Gaussian approximation \cite{trifonov_GA}, beta-expansion \cite{he_beta}, Monte Carlo simulation \cite{arikan}, etc.).
Let us also note that the Bhattacharyya parameter represents the typical performance metric employed for code construction \cite{tal,guo_bhat,vangala}.

It is also worth mentioning that since every node in the SC-based decoding tree represents a polar code constructed for a different channel \cite{arikan}, the results in this section are valid for all the nodes in any polar code of any length. Fig.~\ref{fig:logicallnew} shows the circuit required to generate the list of operations on-line for any node of length $T$, if Type-I, Type-II, Type-III, Type-IV, and Type-V nodes are considered in addition to Rate-0, Rep, SPC, and Rate-1 nodes. It can be seen that the circuit consists of ten comparators, nine NOT gates, and fourteen AND gates, in order to identify all the special nodes.

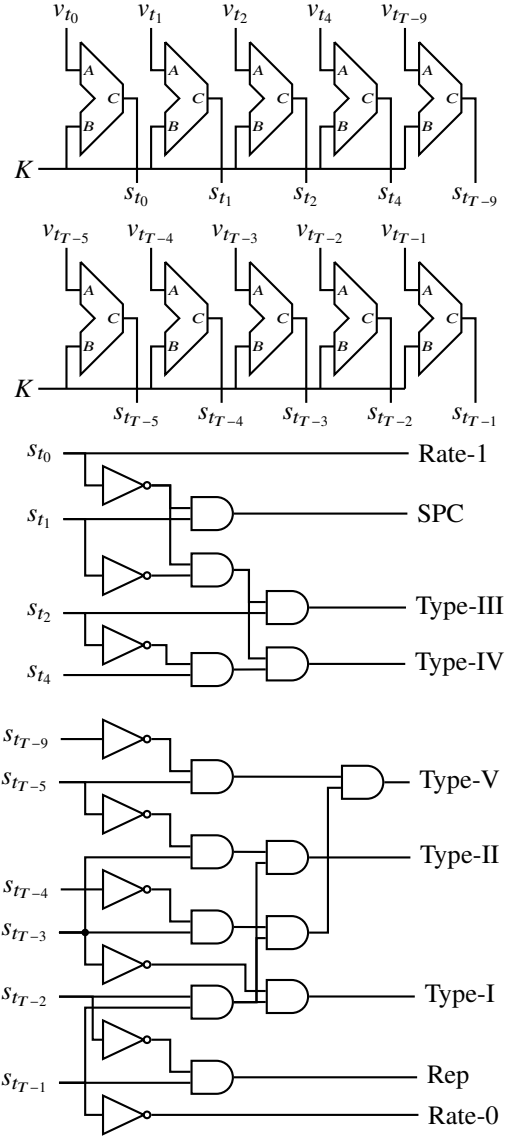
\begin{figure}
  \centering
  \begin{tikzpicture}[circuit logic US, tiny circuit symbols, scale=.75, thick]

\foreach \x in {0,1.5,...,6}
{
\draw (\x,0) -- ++(.75,-.75) -- ++(0,-.5) -- ++(-.75,-.75) -- ++(0,.75) -- ++(.25,.25) -- ++(-.25,.25) -- ++(0,.75);

\node (A) at (\x+.15,-.5) {\tiny $A$};
\node (B) at (\x+.15,-1.5) {\tiny $B$};
\node (C) at (\x+.6,-1) {\tiny $C$};

\draw (\x,-.5) -- ++(-.25,0) -- ++(0,.75);
\draw (\x,-1.5) -- ++(-.25,0) -- ++(0,-.75);
\draw (\x+.75,-1) -- ++(.25,0) -- ++(0,-1.5);
}

\draw (-.75,-2.25) -- (5.75,-2.25);

\node at (-1,-2.25) {$K$};

\node at (-.25,.5) {$v_{t_0}$};
\node at (1.25,.5) {$v_{t_1}$};
\node at (2.75,.5) {$v_{t_2}$};
\node at (4.25,.5) {$v_{t_4}$};
\node at (5.75,.5) {$v_{t_{T-9}}$};

\node at (1,-2.75) {$s_{t_0}$};
\node at (2.5,-2.75) {$s_{t_1}$};
\node at (4,-2.75) {$s_{t_2}$};
\node at (5.5,-2.75) {$s_{t_4}$};
\node at (7,-2.75) {$s_{t_{T-9}}$};

\end{tikzpicture}
  \begin{tikzpicture}[circuit logic US, tiny circuit symbols, scale=.75, thick]

\foreach \x in {0,1.5,...,6}
{
\draw (\x,0) -- ++(.75,-.75) -- ++(0,-.5) -- ++(-.75,-.75) -- ++(0,.75) -- ++(.25,.25) -- ++(-.25,.25) -- ++(0,.75);

\node (A) at (\x+.15,-.5) {\tiny $A$};
\node (B) at (\x+.15,-1.5) {\tiny $B$};
\node (C) at (\x+.6,-1) {\tiny $C$};

\draw (\x,-.5) -- ++(-.25,0) -- ++(0,.75);
\draw (\x,-1.5) -- ++(-.25,0) -- ++(0,-.75);
\draw (\x+.75,-1) -- ++(.25,0) -- ++(0,-1.5);
}

\draw (-.75,-2.25) -- (5.75,-2.25);

\node at (-1,-2.25) {$K$};

\node at (-.25,.5) {$v_{t_{T-5}}$};
\node at (1.25,.5) {$v_{t_{T-4}}$};
\node at (2.75,.5) {$v_{t_{T-3}}$};
\node at (4.25,.5) {$v_{t_{T-2}}$};
\node at (5.75,.5) {$v_{t_{T-1}}$};

\node at (1,-2.75) {$s_{t_{T-5}}$};
\node at (2.5,-2.75) {$s_{t_{T-4}}$};
\node at (4,-2.75) {$s_{t_{T-3}}$};
\node at (5.5,-2.75) {$s_{t_{T-2}}$};
\node at (7,-2.75) {$s_{t_{T-1}}$};

\end{tikzpicture}
  \begin{tikzpicture}[circuit logic US, tiny circuit symbols, scale=1, thick]
\node [and gate, inputs=nn] (and1) at (0,0) {};
\node [and gate, inputs=nn] (and2) at (0,-.75) {};
\node [and gate, inputs=nn] (and3) at (1,-1.25) {};
\node [and gate, inputs=nn] (and4) at (1,-2) {};
\node [and gate, inputs=nn] (and5) at (0,-2.075) {};
\node [not gate] (not1) at (-1.25,.375) {};
\node [not gate] (not2) at (-1.25,-.825) {};
\node [not gate] (not3) at (-1.25,-1.75) {};

\node (s0) at ([xshift=-2cm,yshift=.875cm]and1.input 2) {$s_{t_0}$};
\node (s1) at ([xshift=-2cm]and1.input 2) {$s_{t_1}$};
\node (s2) at ([xshift=-2cm,yshift=-1.25cm]and1.input 2) {$s_{t_2}$};
\node (s4) at ([xshift=-2cm,yshift=-2.075cm]and1.input 2) {$s_{t_4}$};
\draw (s0) --++(0:.6cm) |- (not1.input);
\draw (not1.output) --++(0:.25cm) |- (and1.input 1);
\draw (s1) -- (and1.input 2);
\draw (s1) --++(0:.6cm) |- (not2.input);
\draw (not2.output) -- (and2.input 2);
\draw (not1.output) --++(0:.25cm) |- (and2.input 1);
\draw (s2) -- (and3.input 2);
\draw (and2.output) --++(0:.2cm) |- (and3.input 1);
\draw (s2) --++(0:.6cm) |- (not3.input);
\draw (and2.output) --++(0:.2cm) |- (and4.input 1);
\draw (not3.output) --++(0:.2cm) |- (and5.input 1);
\draw (and5.output) -- (and4.input 2);
\draw (s4) -- (and5.input 2);

\node (out-rate1) at ([xshift=5.5cm]s0) {Rate-1};
\node (out-spc) at ([xshift=2.75cm]and1.output) {SPC};
\node (out-typeiii) at ([xshift=2cm]and3.output) {Type-III};
\node (out-typeiv) at ([xshift=2cm]and4.output) {Type-IV};

\draw (s0) -- (out-rate1);
\draw (and1.output) -- (out-spc);
\draw (and3.output) -- (out-typeiii);
\draw (and4.output) -- (out-typeiv);

%-----

\node [not gate] (not6) at (-1.25,-3) {};
\node [not gate] (not7) at (-1.25,-4) {};
\node [not gate] (not8) at (-1.25,-5) {};
\node [not gate] (not9) at (-1.25,-6) {};
\node [not gate] (not10) at (-1.25,-7) {};
\node [not gate] (not11) at (-1.25,-8) {};

\node [and gate, inputs=nn] (and6) at (0,-3.5) {};
\node [and gate, inputs=nn] (and7) at (0,-4.5) {};
\node [and gate, inputs=nn] (and8) at (0,-5.5) {};
\node [and gate, inputs=nn] (and9) at (0,-6.5) {};
\node [and gate, inputs=nn] (and10) at (0,-7.5) {};
\node [and gate, inputs=nn] (and11) at (1,-6.425) {};
\node [and gate, inputs=nn] (and12) at (1,-4.575) {};
\node [and gate, inputs=nn] (and13) at (1,-5.575) {};
\node [and gate, inputs=nn] (and14) at (2,-3.575) {};

\node (s6) at ([xshift=-1cm]not6.input) {$s_{t_{T-9}}$};
\node (s7) at ([xshift=-2.2cm]and6.input 2) {$s_{t_{T-5}}$};
\node (s8) at ([xshift=-1cm]not8.input) {$s_{t_{T-4}}$};
\node (s9) at ([xshift=-2.2cm]and8.input 2) {$s_{t_{T-3}}$};
\node (s10) at ([xshift=-2.2cm]and9.input 1) {$s_{t_{T-2}}$};
\node (s11) at ([xshift=-2.2cm]and10.input 2) {$s_{t_{T-1}}$};

\draw (s6) -- (not6.input);
\draw (not6.output) --++(0:.25cm) |- (and6.input 1);
\draw (s7) -- (and6.input 2);
\draw (s7) --++(0:.8cm) |- (not7.input);

\draw (s8) -- (not8.input);
\draw (not7.output) --++(0:.25cm) |- (and7.input 1);

\draw (s9) -- (and8.input 2);
\draw (s9) --++(0:.8cm) |- (and7.input 2);
\draw (s9) --++(0:.8cm) |- (not9.input);
\draw (not8.output) --++(0:.25cm) |- (and8.input 1);
\draw (s10) --++(0:.9cm) |- (not10.input);

\fill (s9) --++(0:.8cm) circle (.05cm);

\draw (not9.output) --++(0:1.25cm) |- (and11.input 1);
\draw (not10.output) --++(0:.25cm) |- (and10.input 1);

\draw (s10) -- (and9.input 1);
\draw (s11) --++(0:.8cm) |- (and9.input 2);
\draw (s11) -- (and10.input 2);
\draw (s11) --++(0:.8cm) |- (not11.input);

\draw (and9.output) -- (and11.input 2);
\draw (and7.output) -- (and12.input 1);
\draw (and8.output) -- (and13.input 1);
\draw (and9.output) --++(0:.3cm) |- (and12.input 2);
\draw (and9.output) --++(0:.3cm) |- (and13.input 2);
\draw (and6.output) -- (and14.input 1);
\draw (and13.output) --++(0:.25cm) |- (and14.input 2);

\node (out-typev) at ([xshift=1cm]and14.output) {Type-V};
\node (out-typeii) at ([xshift=2cm]and12.output) {Type-II};
\node (out-typei) at ([xshift=2cm]and11.output) {Type-I};
\node (out-rep) at ([xshift=2.85cm]and10.output) {Rep};
\node (out-rate0) at ([xshift=4.15cm]not11.output) {Rate-0};

\draw (and14.output) -- (out-typev);
\draw (and12.output) -- (out-typeii);
\draw (and11.output) -- (out-typei);
\draw (and10.output) -- (out-rep);
\draw (not11.output) -- (out-rate0);

\end{tikzpicture}
  \caption{Efficient generation of the list of operations on hardware considering new nodes.}
  \label{fig:logicallnew}
\end{figure}

\section{Decoder Architecture} \label{sec:decarch}

As a proof of concept, a decoder architecture implementing the proposed technique has been designed. It implements the layered partitioned SCL (LPSCL) decoding algorithm detailed in \cite{hashemi_TCOM} and the Fast-SSCL-SPC algorithm introduced in \cite{hashemi_FSSCL_TSP}, along with the memory-reduction techniques proposed in \cite{hashemi_JETCAS}. The LPSCL decoder decreases the memory requirements of standard SCL decoding by dividing the SC decoding tree in different partitions; the bottom part of the SC decoding tree belonging to each partition is decoded with SCL with a list size $L_{\max}$. When information needs to be passed between partitions, i.e. at the top stages of the tree, only $L_t<L_{\max}$ candidate codewords are passed, with $L_t$ decreasing progressively as the stage $t$ increases. The Fast-SSCL-SPC algorithm is applied to the lower stages of the tree, where $L_{\max}$ candidates are considered.

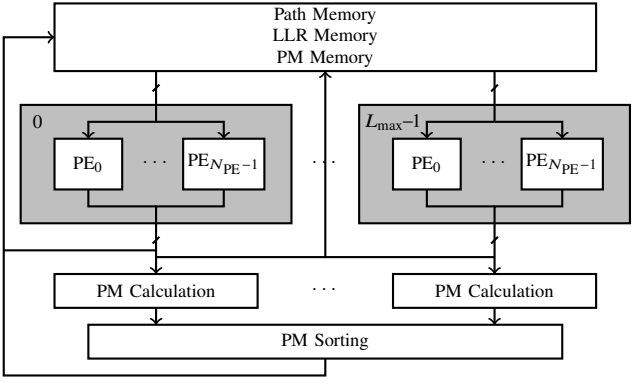
\begin{figure}[t!]
  \centering
  \begin{tikzpicture}[scale=0.9, thick]
\scriptsize

\draw [fill=lightgray] (-.5,.5) rectangle ++(4,-1.75);
\draw [fill=lightgray] (4.5,.5) rectangle ++(4,-1.75);

\draw [fill=white] (0,0) rectangle ++(1,-.75) node[pos=.5] {$\text{PE}_0$};
\node at (1.5,-.375) {$\cdots$};
\draw [fill=white] (1.9,0) rectangle ++(1.2,-.75) node[pos=.5] {$\text{PE}_{N_{\text{PE}}-1}$};
\node at (-.25,.25) {$0$};

\node at (4,-.375) {$\cdots$};

\draw [fill=white] (5,0) rectangle ++(1,-.75) node[pos=.5] {$\text{PE}_0$};
\node at (6.5,-.375) {$\cdots$};
\draw [fill=white] (6.9,0) rectangle ++(1.2,-.75) node[pos=.5] {$\text{PE}_{N_{\text{PE}}-1}$};
\node at (5,.25) {$L_{\max}\!\!-\!\!1$};

\draw (0,2) rectangle ++(8,-1) node[pos=.5,align=center,text width=8cm] {Path Memory \\ LLR Memory \\ PM Memory};

\draw (1.5,1) -- ++(0,-.75);
\draw (1.45,.7) -- ++(.1,.1);
\draw [->] (1.5,.25) -- ++(-1,0) -- ++(0,-.25);
\draw [->] (1.5,.25) -- ++(1,0) -- ++(0,-.25);

\draw (6.5,1) -- ++(0,-.75);
\draw (6.45,.7) -- ++(.1,.1);
\draw [->] (6.5,.25) -- ++(-1,0) -- ++(0,-.25);
\draw [->] (6.5,.25) -- ++(1,0) -- ++(0,-.25);

\draw (.5,-.75) -- ++(0,-.25) -- ++(1,0);
\draw [->] (2.5,-.75) -- ++(0,-.25) -- ++(-1,0) -- ++(0,-1);
\draw (0,-2) rectangle ++(3,-.5) node[pos=.5] {PM Calculation};
\draw (1.45,-1.55) -- ++(.1,.1);
\draw [->] (1.5,-2.5) -- ++(0,-.25);

\draw (1.5,-1.65) -- ++(-2.25,0);

\node at (4,-2.25) {$\cdots$};

\draw (5.5,-.75) -- ++(0,-.25) -- ++(1,0);
\draw [->] (7.5,-.75) -- ++(0,-.25) -- ++(-1,0) -- ++(0,-1);
\draw (5,-2) rectangle ++(3,-.5) node[pos=.5] {PM Calculation};
\draw (6.45,-1.55) -- ++(.1,.1);
\draw [->] (6.5,-2.5) -- ++(0,-.25);

\draw (1.5,-1.75) -- ++(2.5,0);
\draw [->] (6.5,-1.75) -- ++(-2.5,0) -- ++(0,2.75);

\draw (.5,-2.75) rectangle ++(7,-.5) node[pos=.5] {PM Sorting};
\draw [->] (4,-3.25) -- ++(0,-.25) -- ++(-4.75,0) -- ++(0,5) -- ++(.75,0);

\end{tikzpicture}
  \caption{Decoder architecture.}
  \label{fig:arch}
\end{figure}

Fig.~\ref{fig:arch} shows the architecture of the proposed decoder. It is based on a semi-parallel SCL decoder architecture, where $L_{\max}$ sets of $N_{\text{PE}}$ processing elements (PEs) are instantiated in parallel, implementing (\ref{eq:Ffunc2HW}) and (\ref{eq:Gfunc2}). Each set works on a different candidate codeword, as explained in Section~\ref{sec:prel:SCDec}. Different candidate codewords are created whenever one or more information bits are estimated.
Each set of PEs relies on a dedicated memory to store the internal LLR values relative to all stages of the SC decoding tree. LLR values are quantized with $Q_{\text{LLR}}$ bits, and represented with sign and magnitude. Each stage of the SC decoding tree requires the storage of $2^{t-1}$ LLR values. However, given the limited number of PEs instantiated, the LLR memory is split in high stage and low stage memories. The high stage memory stores LLR values of stages with nodes of size greater than $N_{\text{PE}}$: at stage $t$, where $2^t>2N_{\text{PE}}$, a total of $2^t/(2N_{\text{PE}})$ decoding steps are needed to descend to the lower tree level. The depth of the high stage memory is $\sum_{j=\log_2 N_{\text{PE}}+1}^{n-1} 2^j/N_{\text{PE}} = N/N_{\text{PE}}-2$, while it is $Q_{\text{LLR}}\times N_{\text{PE}}$ wide. The low stage memory stores LLR values for stages where $2^t\le 2N_{\text{PE}}$, and it is $Q_{\text{LLR}}$ bits wide, while its depth is $\sum_{j=0}^{\log_2 N_{\text{PE}} - 1} N_{\text{PE}}/2^j = 2N_{\text{PE}}-2$. High and low stage memory words are rewritten when a node belonging to the same stage $t$ is traversed. $L_{\max}$ different instantiations of both high and low stage memories are required. $L_{\max}$ separate path memories store the hard bit estimates (\ref{eq:beta}) for all the tree stages as well, updating them every time that a bit is estimated. PMs, that identify the likelihood of a candidate codeword (or path) to be correct, are incremented every time a bit is estimated differently from the sign of the LLR value associated to it. They are sorted in PM memory before and after the estimation of an information bit, in order to identify the $L_{\max}$ surviving paths out of the $2L_{\max}$ created. When none of the paths coming from the splitting of a particular candidate codeword survives, all stages of its LLR memory are overwritten, along with the bit estimate and PM memories.

This baseline architecture has been modified to implement the LPSCL decoder. The bottom stages of the SC decoding tree are left unchanged, and decoded with a list size $L_{\max}$. Given the partitioning factor $P$, the top $\log_2P$ stages rely on a smaller list size $L_t$, with $n-\log_2P< t \le n$, and $L_t\ge L_{t+1}$. Consequently, only $L_t$ LLR memories are instantiated in the upper stages, reducing the LLR memory requirements for each upper stage of a factor $\frac{L_{\max}-L_t}{L_{\max}}$, as shown in Fig.~\ref{fig:llrmem}. Depending on the number of instantiated PEs and on the partitioning factor, the high and/or low stage memories might need to be separated into different memory structures, each part belonging to a different layer of LPSCL and thus instantiated a different number of times, depending on $L_t$. Since the number of surviving paths is reduced from $L_{\max}$ to $L_t$ when ascending the decoding tree above stage $n-\log_2P$, the $L_{\max}-L_t$ candidate codewords with the highest PMs need to be discarded. In the baseline architecture, PMs are sorted only when an information bit is estimated, i.e. when the paths split. However, in the proposed architecture the PMs need to be sorted also when $i \mod (N/P) = 0$, where $i$ is the index of the codeword bit that needs to be estimated, and $\mod$ represents the modulo operation. The decoding of a bit with such an index $i$ identifies the completion of the decoding of a subtree of size $N/P$, and the need to transfer information to the upper tree stages, where $L_t<L_{\max}$. The sorting of PMs allows the most reliable paths, their LLR values, and their hard bit estimates to be transferred between partitions through the memory copy mechanism addressed in Fig.~\ref{fig:llrmem}.

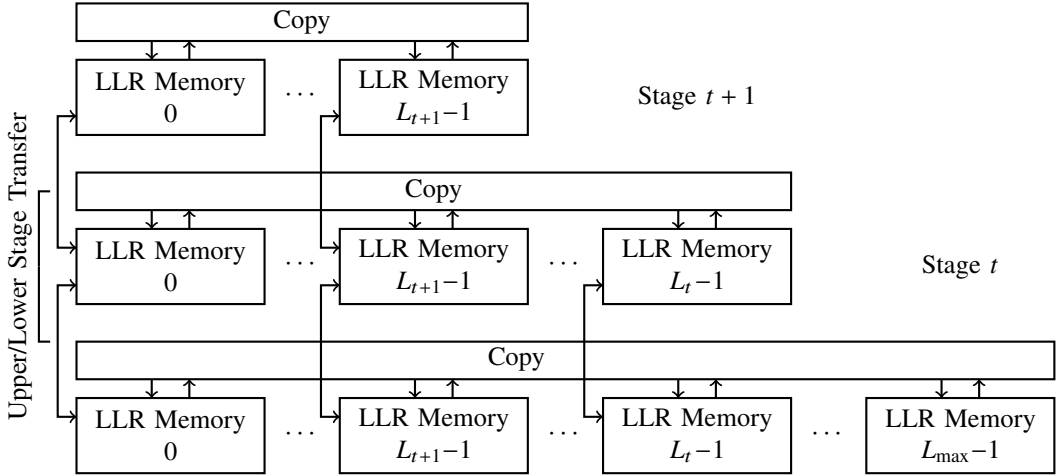
\begin{figure*}[t!]
  \centering
  \begin{tikzpicture}[scale=1, thick]

\draw (0,1) rectangle ++(6,-.5) node[pos=.5] {Copy};
\draw (0,.25) rectangle ++(2.5,-1) node[pos=.5,align=center,text width=2.5cm] {LLR Memory \\ $0$};
\node at (3,-.25) {$\cdots$};
\draw (3.5,.25) rectangle ++(2.5,-1) node[pos=.5,align=center,text width=2.5cm] {LLR Memory \\ $L_{t+1}\!\!-\!\!1$};

\draw [->] (1,.5) -- ++(0,-.25);
\draw [->] (1.5,.25) -- ++(0,.25);
\draw [->] (4.5,.5) -- ++(0,-.25);
\draw [->] (5,.25) -- ++(0,.25);

\node at (8.25,-.25) {Stage $t+1$};

\draw [<->] (0,-.5) -- ++(-.25,0) -- ++(0,-1.75) -- ++(.25,0);
\draw [<->] (3.5,-.5) -- ++(-.25,0) -- ++(0,-1.75) -- ++(.25,0);

\draw (0,-1.25) rectangle ++(9.5,-.5) node[pos=.5] {Copy};
\draw (0,-2) rectangle ++(2.5,-1) node[pos=.5,align=center,text width=2.5cm] {LLR Memory \\ $0$};
\node at (3,-2.5) {$\cdots$};
\draw (3.5,-2) rectangle ++(2.5,-1) node[pos=.5,align=center,text width=2.5cm] {LLR Memory \\ $L_{t+1}\!\!-\!\!1$};
\node at (6.5,-2.5) {$\cdots$};
\draw (7,-2) rectangle ++(2.5,-1) node[pos=.5,align=center,text width=2.5cm] {LLR Memory \\ $L_{t}\!\!-\!\!1$};

\draw [->] (1,-1.75) -- ++(0,-.25);
\draw [->] (1.5,-2) -- ++(0,.25);
\draw [->] (4.5,-1.75) -- ++(0,-.25);
\draw [->] (5,-2) -- ++(0,.25);
\draw [->] (8,-1.75) -- ++(0,-.25);
\draw [->] (8.5,-2) -- ++(0,.25);

\node at (11.75,-2.5) {Stage $t$};

\draw [<->] (0,-2.75) -- ++(-.25,0) -- ++(0,-1.75) -- ++(.25,0);
\draw [<->] (3.5,-2.75) -- ++(-.25,0) -- ++(0,-1.75) -- ++(.25,0);
\draw [<->] (7,-2.75) -- ++(-.25,0) -- ++(0,-1.75) -- ++(.25,0);

\draw (0,-3.5) rectangle ++(13,-.5) node[pos=.45] {Copy};
\draw (0,-4.25) rectangle ++(2.5,-1) node[pos=.5,align=center,text width=2.5cm] {LLR Memory \\ $0$};
\node at (3,-4.75) {$\cdots$};
\draw (3.5,-4.25) rectangle ++(2.5,-1) node[pos=.5,align=center,text width=2.5cm] {LLR Memory \\ $L_{t+1}\!\!-\!\!1$};
\node at (6.5,-4.75) {$\cdots$};
\draw (7,-4.25) rectangle ++(2.5,-1) node[pos=.5,align=center,text width=2.5cm] {LLR Memory \\ $L_{t}\!\!-\!\!1$};
\node at (10,-4.75) {$\cdots$};
\draw (10.5,-4.25) rectangle ++(2.5,-1) node[pos=.5,align=center,text width=2.5cm] {LLR Memory \\ $L_{\max}\!\!-\!\!1$};

\draw [->] (1,-4) -- ++(0,-.25);
\draw [->] (1.5,-4.25) -- ++(0,.25);
\draw [->] (4.5,-4) -- ++(0,-.25);
\draw [->] (5,-4.25) -- ++(0,.25);
\draw [->] (8,-4) -- ++(0,-.25);
\draw [->] (8.5,-4.25) -- ++(0,.25);
\draw [->] (11.5,-4) -- ++(0,-.25);
\draw [->] (12,-4.25) -- ++(0,.25);

\node [rotate=90] at (-.75,-2.5) {Upper/Lower Stage Transfer};

\draw (-.5,-2.5) -- ++(0,1) -- ++(.15,0);
\draw (-.5,-2.5) -- ++(0,-1) -- ++(.15,0);

\end{tikzpicture}
  \caption{LPSCL LLR memory structure.}
  \label{fig:llrmem}
\end{figure*}

The implementation of the Fast-SSCL-SPC algorithm requires more substantial modifications, that have been detailed in \cite{hashemi_FSSCL_TSP}. The hard bit estimate memory and path memories are updated according to different values depending on the node type, along with PMs. This requires different parallel instantiations of the PM computation logic, as shown in Fig.~\ref{fig:pm}. More complex routing and selection logic are necessary to update memories, since multiple concurrent values need to be updated and propagated through the hard bit estimates memory structure. A sorter module for LLR values is needed in Rate-1 and SPC nodes, to identify the order with which bits are estimated: the disruption of the sequential bit estimation order that SC is based on leads to additional complexity in memory updates and control logic.

\begin{figure*}
  \centering
  \begin{tikzpicture}[scale=1.25, thick]

\draw (0,0) rectangle ++(1,-.75) node[pos=.5] {Rate-1};
\draw (1.5,0) rectangle ++(1,-.75) node[pos=.5] {Rep};
\draw (3,0) rectangle ++(1,-.75) node[pos=.5] {SPC};
\draw (5,0) rectangle ++(1,-.75) node[pos=.5] {Rate-0};
\draw (7,0) rectangle ++(1,-.75) node[pos=.5] {Rate-1};
\draw (8.5,0) rectangle ++(1,-.75) node[pos=.5] {Rep};
\draw (10,0) rectangle ++(1,-.75) node[pos=.5] {SPC};

\draw [dashed] (-.5,.5) rectangle ++(5,-2.75);
\draw [dashed] (6.5,.5) rectangle ++(5,-2.75);

\draw (0,-1.25) -- ++(4,0) -- ++(-.5,-.5) -- ++(-3,0) -- ++(-.5,.5);
\draw (7,-1.25) -- ++(4,0) -- ++(-.5,-.5) -- ++(-3,0) -- ++(-.5,.5);

\draw [->] (5.5,.75) -- ++(0,-.75);
\draw (5.45,.45) -- ++(.1,.1);
\draw [->] (5.5,.25) -- ++(-2,0) -- ++(0,-.25);
\draw [->] (5.5,.25) -- ++(2,0) -- ++(0,-.25);
\draw [->] (3.5,.25) -- ++(-1.5,0) -- ++(0,-.25);
\draw [->] (7.5,.25) -- ++(1.5,0) -- ++(0,-.25);
\draw [->] (2,.25) -- ++(-1.5,0) -- ++(0,-.25);
\draw [->] (9,.25) -- ++(1.5,0) -- ++(0,-.25);

\draw [->] (.5,-.75) -- ++(0,-.5);
\draw [->] (2,-.75) -- ++(0,-.5);
\draw [->] (3.5,-.75) -- ++(0,-.5);
\draw [->] (7.5,-.75) -- ++(0,-.5);
\draw [->] (9,-.75) -- ++(0,-.5);
\draw [->] (10.5,-.75) -- ++(0,-.5);

\draw [->] (5.5,-.75) -- ++(0,-.25) -- ++(-1.75,0) -- ++(0,-.25);
\draw [->] (5.5,-1) -- ++(1.75,0) -- ++(0,-.25);

\draw [->] (2,-1.75) -- ++(0,-.75);
\draw [->] (9,-1.75) -- ++(0,-.75);

\node at (5.5,1) {LLR Values};
\node at (2,-2.75) {PM};
\node at (9,-2.75) {PM};
\node at (0,.25) {$\text{bit} = 0$};
\node at (11,.25) {$\text{bit} = 1$};

\end{tikzpicture}
  \caption{PM calculation for Fast-SSCL-SPC.}
  \label{fig:pm}
\end{figure*}
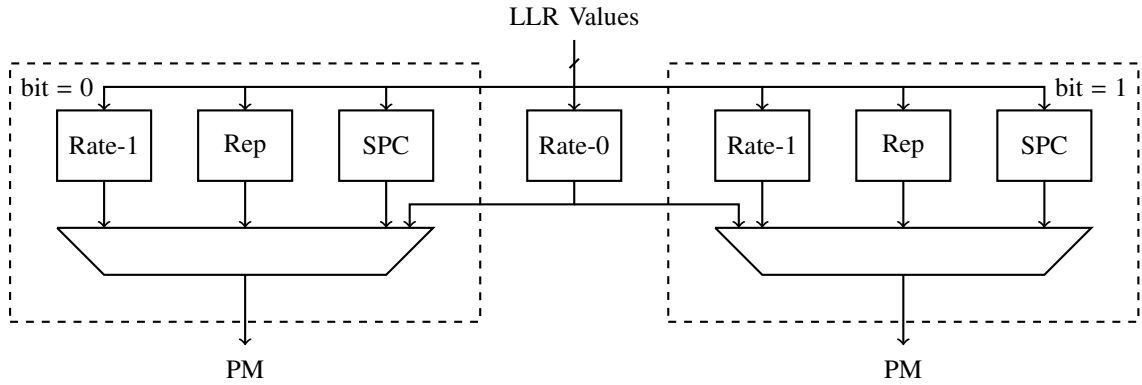

Aside from the logic needed to perform the calculations for special node PM update and bit estimations, the decoder needs to know at which point in the SC tree the special nodes are found, and what is their type. This information is used to identify the number of clock cycles needed for the decoding of a particular node, and which of the different parallel PM, path, and LLR updates is memorized. In \cite{hashemi_FSSCL_TSP}, the proposed decoder architecture relied on an off-line compiler to obtain the sequence of special nodes, their size, and the stage at which they are encountered. These informations differ for every code supported by the decoder, and need to be stored in a memory. Note that the frozen and information bit sequence can be either stored in a memory, as supposed by most decoder architectures in literature, or computed on-line given the bit-channel relative reliability vector and the desired code rate, as proposed in \cite{Condo_SIPS}. This approach is significantly more efficient in case of multi-code decoders, and is facilitated by nested reliability vectors as those selected for the 5G eMBB control channel \cite{3gpp_polarSequence}. This is the approach taken in both the baseline and the modified architectures in this paper, by comparing each entry of the relative reliability vector $\mathbf{v}$ to the desired $K$ in order to obtain $\mathbf{s}$.

The control unit of the modified architecture implements the proposed special node on-line identification, based on the relative reliability vector $\mathbf{v}$ and $K$. Fig.~\ref{fig:logicall} shows the simple logic needed to identify the considered special nodes. Given the low complexity of the node identification circuit, the structure is instantiated at every decoding tree stage $t$, separately at every partition identified by LPSCL, to reduce the amount of multiplexing needed at the inputs and the possible increase in the system critical path.
The logic pictured in Fig.~\ref{fig:logicall} is inserted within a finite state machine (FSM) in the decoder control unit to identify the correct decoding phase, through two main control signals, \texttt{NodeType} and \texttt{NodeSize}. A maximum \texttt{NodeSize} value for each \texttt{NodeType} is selected at design time, to limit the additional complexity and critical path degradation. 
\begin{itemize}
 \item While the general node type can be identified easily through the proposed identification, different decoding phases are foreseen within each special node. Thus, \texttt{NodeType} foresees subtypes in the special node. While the Rate-0 node is a standalone node type, the Rate-1 node is divided into three subtypes: one phase is assigned to the fetching and sorting of the LLR values, a second to the estimation of the bits associated to the least reliable LLR values, and the third for the hard-decision on the remainder of the bits. The Rep node is divided in two subtypes, one for the frozen bits and one for the information bit. Finally, SPC nodes foresee four subtypes: one for the concurrent fetching and sorting of LLR values and frozen bit selection, one for the bit estimations, one for the hard decision on the remaining bits, and one for the parity correction. The \texttt{NodeType} signal is thus influenced not only by the result of the logic in Fig.~\ref{fig:logicall}, but also by the number of estimated bits within the special node, the stage $t$, and the current \texttt{NodeType} subtype.
 \item The control unit identifies the size of the special node \texttt{NodeSize} as $2^t$, given the current SC decoding tree stage $t$. This information is used to update the index $i$ of the codeword bit to be estimated. The index $i$ is usually updated once a leaf node has been reached and the corresponding bit estimated, but during the decoding of special nodes, it is kept fixed pointing at the first bit of the node. Once the decoding is terminated, the index is updated as $i+\texttt{NodeSize}$.
\end{itemize}

\section{Hardware Implementation Results} \label{sec:results}

The proposed decoder architecture has been described in VHDL and synthesized in TSMC 65~nm CMOS technology, at the operating conditions defined by the NCCOM corner, i.e. $1.2$~V core voltage and a temperature of $298$~K. Two versions of the decoder have been implemented: one considering the proposed special node identification technique, and one based on the off-line identification and storage used in \cite{hashemi_FSSCL_TSP}. Both decoders target the 5G polar code with a code length $N=1024$ \cite{3gpp_polarSequence}, rely on a partitioning factor $P=4$, and make use of $64$ parallel PEs. The bottom part of the SC decoding tree is decoded with a list size $L_{\max}=4$, while for the upper stages $L_{10}=L_{9}=2$. Fig.~\ref{fig:performance} shows the frame error rate (FER) and bit error rate (BER) performance of the LPSCL decoder used in this paper in comparison with SCL decoding with $L=4$. The curves in Fig.~\ref{fig:performance} are provided for the code rates of $\{\frac{1}{12},\frac{1}{6},\frac{1}{3},\frac{1}{2},\frac{2}{3}\}$. It can be seen that LPSCL decoding incurs negligible FER and BER performance loss with respect to SCL for all considered rates. It should be noted that the introduction of the proposed technique to infer the list of operations on the fly does not change the FER or BER performance of the decoder in comparison with the same memory-based decoder.

\begin{figure*}
  \centering
  \begin{tikzpicture}
  \pgfplotsset{
    label style = {font=\fontsize{9pt}{7.2}\selectfont},
    tick label style = {font=\fontsize{7pt}{7.2}\selectfont}
  }

\begin{axis}[
	scale = 1,
    ymode=log,
    xlabel={$E_b/N_0$ [\text{dB}]}, xlabel style={yshift=0.8em},
    ylabel={FER}, ylabel style={yshift=-0.75em},
    grid=both,
    ymajorgrids=true,
    xmajorgrids=true,
    grid style=dashed,
%    width=0.5\columnwidth, height=7cm,
    thick,
    mark size=3,
    legend style={
      anchor={center},
      cells={anchor=west},
      column sep= 2mm,
      font=\fontsize{7pt}{7.2}\selectfont,
    },
    legend to name=perf-legend1kL4,
    legend columns=5,
]

\addplot[
    color=black,
    mark=square,
    thick,
    mark size=3,
    dashed,
]
table {
1 0.0352
1.5 0.00962371
2 0.00292458
2.5 0.00106002
3 0.000317617
3.5 7.92321e-05
4 2.25283e-05
4.5 3.85473e-06
};
\addlegendentry{SCL - $R = \frac{1}{12}$}

\addplot[
    color=blue,
    mark=o,
    thick,
    mark size=3,
    dashed,
]
table {
1 0.0215
1.5 0.00420292
2 0.000926836
2.5 0.000239625
3 6.4039e-05
3.5 1.66664e-05
4 2.97072e-06
};
\addlegendentry{SCL - $R = \frac{1}{6}$}

\addplot[
    color=brown,
    mark=triangle,
    thick,
    mark size=3,
    dashed,
]
table {
1 0.0812
1.5 0.0182
2 0.00490822
2.5 0.00125389
3 0.000424899
3.5 9.74888e-05
4 2.2e-05
};
\addlegendentry{SCL - $R = \frac{1}{3}$}

\addplot[
    color=red,
    mark=pentagon,
    thick,
    mark size=3,
    dashed,
]
table {
1 0.352
1.5 0.0725
2 0.00978474
2.5 0.00177749
3 0.000270537
3.5 3.87083e-05
4 3.99636e-06
};
\addlegendentry{SCL - $R = \frac{1}{2}$}

\addplot[
    color=green,
    mark=diamond,
    thick,
    mark size=3,
    dashed,
]
table {
1 0.9258
1.5 0.5609
2 0.1418
2.5 0.017
3 0.00219072
3.5 0.000450479
4 9.82839e-05
4.5 1.89824e-05
};
\addlegendentry{SCL - $R = \frac{2}{3}$}

\addplot[
    color=black,
    mark=square,
    thick,
    mark size=3,
]
table {
1 0.0423
1.5 0.0126
2 0.00348117
2.5 0.00121459
3 0.000325961
3.5 7.96135e-05
4 2.25283e-05
4.5 3.85473e-06
};
\addlegendentry{LPSCL - $R = \frac{1}{12}$}

\addplot[
    color=blue,
    mark=o,
    thick,
    mark size=3,
]
table {
1 0.0264
1.5 0.00506047
2 0.00103361
2.5 0.000246176
3 6.68715e-05
3.5 1.66664e-05
4 3.01186e-06
};
\addlegendentry{LPSCL - $R = \frac{1}{6}$}

\addplot[
    color=brown,
    mark=triangle,
    thick,
    mark size=3,
]
table {
1 0.0953
1.5 0.0206
2 0.00592768
2.5 0.00137172
3 0.000426052
3.5 9.74888e-05
4 2.2e-05
4.5 3.81068e-06
};
\addlegendentry{LPSCL - $R = \frac{1}{3}$}

\addplot[
    color=red,
    mark=pentagon,
    thick,
    mark size=3,
]
table {
1 0.3656
1.5 0.0787
2 0.0108
2.5 0.00197394
3 0.000282311
3.5 3.87083e-05
4 3.99636e-06
};
\addlegendentry{LPSCL - $R = \frac{1}{2}$}

\addplot[
    color=green,
    mark=diamond,
    thick,
    mark size=3,
]
table {
1 0.9274
1.5 0.574
2 0.1542
2.5 0.0195
3 0.00257679
3.5 0.000450479
4 9.82839e-05
4.5 1.89824e-05
5 2.86469e-06
};
\addlegendentry{LPSCL - $R = \frac{2}{3}$}

\end{axis}
\end{tikzpicture}
  \begin{tikzpicture}
  \pgfplotsset{
    label style = {font=\fontsize{9pt}{7.2}\selectfont},
    tick label style = {font=\fontsize{7pt}{7.2}\selectfont},
  }

\begin{axis}[
	scale = 1,
    ymode=log,
    xlabel={$E_b/N_0$ [\text{dB}]}, xlabel style={yshift=0.8em},
    ylabel={BER}, ylabel style={yshift=-0.75em},%
    grid=both,
    ymajorgrids=true,
    xmajorgrids=true,
%    width=0.5\columnwidth, height=7.0cm,
    grid style=dashed,
    thick,
    mark size=3,
]

\addplot[
    color=black,
    mark=square,
    thick,
    mark size=3,
    dashed,
]
table {
1 0.00678588
1.5 0.00136091
2 0.000279039
2.5 0.000104256
3 2.85108e-05
3.5 6.54364e-06
4 2.13356e-06
4.5 3.72775e-07
};
%\addlegendentry{SCL - $R = \frac{1}{12}$}

\addplot[
    color=blue,
    mark=o,
    thick,
    mark size=3,
    dashed,
]
table {
1 0.00372941
1.5 0.000474682
2 7.31655e-05
2.5 1.41802e-05
3 2.92319e-06
3.5 7.77439e-07
4 1.31236e-07
};
%\addlegendentry{SCL - $R = \frac{1}{6}$}

\addplot[
    color=brown,
    mark=triangle,
    thick,
    mark size=3,
    dashed,
]
table {
1 0.0136378
1.5 0.00176012
2 0.000321265
2.5 5.60756e-05
3 1.84289e-05
3.5 4.21689e-06
4 9.64518e-07
};
%\addlegendentry{SCL - $R = \frac{1}{3}$}

\addplot[
    color=red,
    mark=pentagon,
    thick,
    mark size=3,
    dashed,
]
table {
1 0.102549
1.5 0.014084
2 0.00104555
2.5 0.000125188
3 1.61741e-05
3.5 2.33157e-06
4 2.31352e-07
};
%\addlegendentry{SCL - $R = \frac{1}{2}$}

\addplot[
    color=green,
    mark=diamond,
    thick,
    mark size=3,
    dashed,
]
table {
1 0.378191
1.5 0.180475
2 0.0328501
2.5 0.00242786
3 0.000149303
3.5 2.02782e-05
4 4.61012e-06
4.5 7.49834e-07
};
%\addlegendentry{SCL - $R = \frac{2}{3}$}

\addplot[
    color=black,
    mark=square,
    thick,
    mark size=3,
]
table {
1 0.00910353
1.5 0.00238706
2 0.000480811
2.5 0.000143751
3 3.12923e-05
3.5 6.61261e-06
4 2.13356e-06
4.5 3.72775e-07
};
%\addlegendentry{LPSCL - $R = \frac{1}{12}$}

\addplot[
    color=blue,
    mark=o,
    thick,
    mark size=3,
]
table {
1 0.00511941
1.5 0.000615294
2 8.79179e-05
2.5 1.58566e-05
3 3.09969e-06
3.5 7.77439e-07
4 1.34471e-07
};
%\addlegendentry{LPSCL - $R = \frac{1}{6}$}

\addplot[
    color=brown,
    mark=triangle,
    thick,
    mark size=3,
]
table {
1 0.0176762
1.5 0.00242933
2 0.000502549
2.5 6.60921e-05
3 1.88038e-05
3.5 4.21689e-06
4 9.64518e-07
4.5 1.70419e-07
};
%\addlegendentry{LPSCL - $R = \frac{1}{3}$}

\addplot[
    color=red,
    mark=pentagon,
    thick,
    mark size=3,
]
table {
1 0.106355
1.5 0.0150416
2 0.00130176
2.5 0.000150398
3 1.69001e-05
3.5 2.33157e-06
4 2.31352e-07
};
%\addlegendentry{LPSCL - $R = \frac{1}{2}$}

\addplot[
    color=green,
    mark=diamond,
    thick,
    mark size=3,
]
table {
1 0.379555
1.5 0.184599
2 0.0360519
2.5 0.00278358
3 0.000185854
3.5 1.98686e-05
4 4.61012e-06
4.5 7.49834e-07
5 9.55597e-08
};
%\addlegendentry{LPSCL - $R = \frac{2}{3}$}

\end{axis}
\end{tikzpicture}
  \ref{perf-legend1kL4}
  \caption{FER and BER performance comparison of decoding the 5G polar code of length $N=1024$ and $R \in \{\frac{1}{12},\frac{1}{6},\frac{1}{3},\frac{1}{2},\frac{2}{3}\}$, using LPSCL decoding with $L_{\max} = 4$ and $L_{10}=L_{9}=2$, and SCL decoding with $L=4$.}
  \label{fig:performance}
\end{figure*}
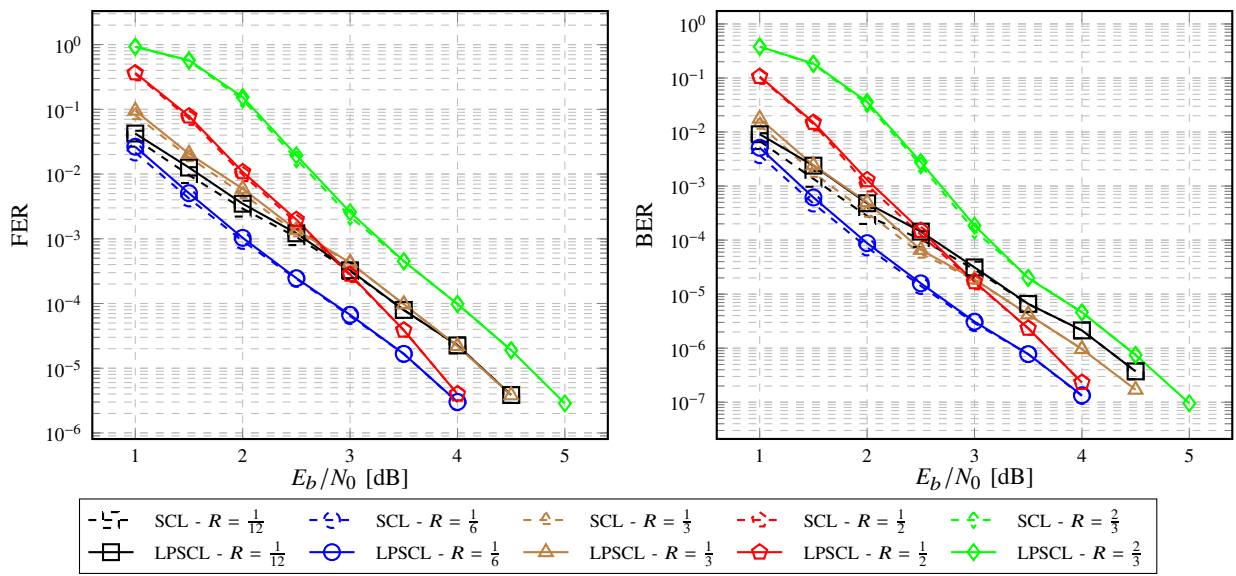

The channel LLR values are quantized with $4$ bits and internal LLR values with $6$ bits, with $2$ bits assigned to the fractional part, while PMs are quantized with $8$ bits \cite{hashemi_JETCAS}. The maximum node size is set to $16$ for Rate-0 and Rep nodes, and to $64$ for Rate-1 and SPC nodes.
Table~\ref{tab:HW} reports the area occupation and achievable frequency for the proposed decoder, and for the decoder based on the off-line identification technique, labelled as memory-based decoder. The two decoders differ in their implementation of the control unit (CU): its area occupation $A_{\text{CU}}$ in the proposed decoder is $24\%$ less than that of the memory-based decoder. This is due to the fact that the information computed off-line in the memory-based case, i.e. the equivalent of the \texttt{NodeType} signal, needs to be inserted in an FSM analogous to that used by the control unit of the proposed decoder. This FSM handles the node subtypes and the internal counters that determine when a special node decoding is terminated. Moreover, the memory-based case needs an additional information, \texttt{NodeStage}, to identify at which SC decoding tree stage the special node is encountered: the \texttt{NodeSize} information is derived from that. The \texttt{NodeStage} signal is inserted in its own FSM, that adds substantial complexity to the control unit, resulting in a larger $A_{\text{CU}}$. While the contribution of $A_{\text{CU}}$ to the total decoder area occupation $A_{\text{total}}$ is relatively small, with $A_{\text{total}}=1.410$~mm$^2$ and $A_{\text{total}}=1.454$~mm$^2$ for the proposed and the memory-based decoders respectively, the \texttt{NodeStage} FSM influences signals in the \texttt{NodeSize} and \texttt{NodeType} FSM, lengthening the critical path. In particular, the state of \texttt{NodeStage} is combined to the \texttt{NodeType} and \texttt{NodeSize} to determine the current and future node subtypes. This leads to a lower achievable frequency $f$, lower throughput $T_P$, and lower area efficiency $A_{\text{eff}}$ in the memory-based decoder in comparison with the proposed decoder, as provided in Table~\ref{tab:HW}.

\begin{table}
\centering
\caption{TSMC CMOS 65~nm synthesis results for $N=1024$, $P=4$, $L_{\max}=4$, and $L_{10}=L_{9}=2$.}
\label{tab:HW}
\setlength{\extrarowheight}{2.5pt}
\begin{tabular}{llcc}
\toprule

 & & Proposed & Memory-based\\
\midrule
 $A_{\text{CU}}$ & [$\mu$m$^2$] & $35881$ & $47025$\\
 $A_{\text{total}}$ & [mm$^2$] & $1.410$ & $1.454$\\		       
 $f$ & [MHz] & $955$ & $926$\\
 $T_P$ @ $R=\frac{1}{2}$ &[Mb/s] & $1223$ & $1186$ \\
$A_{\rm eff}$ @ $R=\frac{1}{2}$ & [$\frac{\rm Mb/s}{{\rm mm}^2}$] & $867$ & $816$\\
${\rm Mem}_{\text{ext}}$ & [bits] & $10240$ & $1025120$ \\
\bottomrule
\end{tabular}
\end{table} 

The proposed decoder fetches four required values of the relative reliability vector from memory, compares them with $K$, and identifies the node types efficiently. Table~\ref{tab:HW} also reports the external memory requirements $\text{Mem}_{\text{ext}}$ of the proposed decoder in comparison with the memory-based decoder considering 5G code rates are supported. For a code of length $1024$, the vector of relative reliabilities $\mathbf{v}$ contains $1024$ entries where each entry is stored with $10$ bits. Therefore, a total of $1024 \times 10 = 10240$ bits are stored in memory. For the memory-based decoder, the memory requirement is different for different values of $K$ (different rates). This is depicted in Fig.~\ref{fig:fastSchedule} where it can be seen that the list of operations is large for medium rates and becomes small as the rate becomes very high or very low. Note that the proposed decoder is capable of supporting any code rate within a given code length which is also foreseen in 5G \cite{3gpp_polarInfo}. If the memory-based decoder is designed such that it supports all the code rates of 5G for a code of length $1024$ ($12 \leq K \leq 1024$), the memory requirement of it considering a $4$-bit representation for \texttt{NodeType} and \texttt{NodeStage} signals is $128140\times 8 = 1025120$ bits, more than $100$ times larger than the number of bits required for the proposed decoder.

\begin{figure}
  \centering
  \begin{tikzpicture}
  \pgfplotsset{
    label style = {font=\fontsize{9pt}{7.2}\selectfont},
    tick label style = {font=\fontsize{7pt}{7.2}\selectfont}
  }

\begin{axis}[
	scale = 1,
	xmin = 0,
	xmax = 1024,
	ymin = 0,
    xlabel={$K$}, xlabel style={yshift=0.8em},
    ylabel={$\text{Mem}_{\text{ext}}$ [bits]}, ylabel style={yshift=-0.75em},
    grid=both,
    ymajorgrids=true,
    xmajorgrids=true,
    grid style=dashed,
%    width=0.5\columnwidth, height=7cm,
    thick,
]

\addplot[
    thick,
]
table {
0 16
1 16
2 144
3 128
4 128
5 128
6 128
7 128
8 128
9 144
10 112
11 144
12 192
13 192
14 176
15 176
16 176
17 160
18 224
19 224
20 208
21 208
22 208
23 208
24 208
25 208
26 288
27 288
28 272
29 288
30 288
31 288
32 240
33 336
34 336
35 320
36 336
37 336
38 304
39 304
40 304
41 416
42 448
43 448
44 448
45 464
46 448
47 416
48 416
49 416
50 400
51 400
52 400
53 432
54 416
55 416
56 432
57 480
58 480
59 448
60 448
61 576
62 576
63 576
64 576
65 608
66 608
67 608
68 592
69 576
70 560
71 576
72 576
73 576
74 624
75 624
76 624
77 592
78 592
79 592
80 592
81 624
82 624
83 608
84 608
85 672
86 672
87 688
88 688
89 672
90 608
91 608
92 656
93 640
94 640
95 656
96 656
97 672
98 672
99 672
100 672
101 672
102 656
103 624
104 624
105 624
106 576
107 576
108 608
109 672
110 672
111 672
112 688
113 672
114 656
115 656
116 624
117 672
118 688
119 688
120 688
121 688
122 688
123 688
124 768
125 768
126 768
127 768
128 800
129 752
130 736
131 752
132 752
133 768
134 752
135 736
136 800
137 800
138 800
139 768
140 768
141 768
142 736
143 752
144 784
145 784
146 768
147 768
148 768
149 768
150 768
151 800
152 816
153 896
154 944
155 896
156 896
157 896
158 880
159 864
160 896
161 912
162 880
163 880
164 880
165 880
166 880
167 880
168 880
169 864
170 832
171 816
172 816
173 800
174 832
175 832
176 832
177 880
178 880
179 880
180 944
181 960
182 976
183 976
184 976
185 960
186 960
187 960
188 1056
189 1056
190 1040
191 1040
192 1024
193 1024
194 1024
195 1024
196 1024
197 992
198 1024
199 976
200 976
201 992
202 992
203 992
204 992
205 976
206 976
207 960
208 960
209 960
210 976
211 1056
212 1056
213 1072
214 1040
215 1088
216 1088
217 1056
218 1056
219 1056
220 1056
221 1056
222 1056
223 1040
224 1072
225 1072
226 1088
227 1104
228 1104
229 1136
230 1232
231 1232
232 1216
233 1200
234 1200
235 1200
236 1168
237 1168
238 1168
239 1168
240 1088
241 1120
242 1104
243 1120
244 1072
245 1056
246 1056
247 1056
248 1056
249 1056
250 1056
251 1056
252 1072
253 1008
254 1008
255 1056
256 1056
257 1056
258 1056
259 1072
260 1088
261 1072
262 1072
263 1072
264 1072
265 1056
266 1104
267 1104
268 1136
269 1136
270 1152
271 1120
272 1120
273 1120
274 1088
275 1088
276 1088
277 1152
278 1104
279 1104
280 1088
281 1200
282 1232
283 1232
284 1248
285 1248
286 1248
287 1264
288 1248
289 1232
290 1168
291 1168
292 1168
293 1168
294 1184
295 1152
296 1152
297 1120
298 1104
299 1104
300 1104
301 1136
302 1136
303 1136
304 1152
305 1152
306 1200
307 1232
308 1232
309 1232
310 1216
311 1216
312 1184
313 1184
314 1168
315 1200
316 1184
317 1184
318 1184
319 1184
320 1184
321 1184
322 1136
323 1136
324 1200
325 1200
326 1216
327 1232
328 1232
329 1232
330 1216
331 1232
332 1200
333 1248
334 1248
335 1184
336 1216
337 1216
338 1216
339 1200
340 1168
341 1168
342 1168
343 1152
344 1152
345 1136
346 1152
347 1152
348 1136
349 1136
350 1184
351 1184
352 1200
353 1200
354 1200
355 1200
356 1232
357 1280
358 1280
359 1280
360 1264
361 1264
362 1216
363 1216
364 1184
365 1184
366 1200
367 1200
368 1264
369 1280
370 1280
371 1280
372 1264
373 1216
374 1232
375 1200
376 1200
377 1200
378 1200
379 1200
380 1200
381 1200
382 1200
383 1200
384 1264
385 1264
386 1264
387 1296
388 1312
389 1312
390 1280
391 1296
392 1296
393 1280
394 1312
395 1312
396 1328
397 1312
398 1344
399 1344
400 1344
401 1328
402 1328
403 1328
404 1328
405 1376
406 1376
407 1360
408 1440
409 1376
410 1376
411 1392
412 1360
413 1376
414 1376
415 1360
416 1312
417 1312
418 1312
419 1296
420 1280
421 1312
422 1312
423 1312
424 1296
425 1312
426 1280
427 1280
428 1264
429 1264
430 1264
431 1328
432 1328
433 1360
434 1328
435 1280
436 1280
437 1296
438 1296
439 1296
440 1344
441 1344
442 1344
443 1344
444 1344
445 1360
446 1376
447 1360
448 1344
449 1376
450 1392
451 1376
452 1376
453 1376
454 1376
455 1344
456 1344
457 1392
458 1344
459 1344
460 1344
461 1344
462 1312
463 1312
464 1312
465 1296
466 1312
467 1328
468 1328
469 1408
470 1440
471 1472
472 1472
473 1376
474 1376
475 1376
476 1376
477 1376
478 1328
479 1328
480 1296
481 1296
482 1296
483 1280
484 1296
485 1296
486 1296
487 1216
488 1216
489 1200
490 1184
491 1200
492 1264
493 1248
494 1248
495 1248
496 1248
497 1248
498 1248
499 1248
500 1280
501 1280
502 1248
503 1280
504 1264
505 1280
506 1264
507 1264
508 1264
509 1264
510 1264
511 1312
512 1312
513 1392
514 1344
515 1344
516 1328
517 1328
518 1328
519 1344
520 1360
521 1360
522 1376
523 1376
524 1360
525 1360
526 1328
527 1360
528 1360
529 1360
530 1360
531 1360
532 1344
533 1344
534 1360
535 1328
536 1376
537 1392
538 1328
539 1328
540 1328
541 1360
542 1360
543 1360
544 1360
545 1376
546 1376
547 1360
548 1376
549 1344
550 1344
551 1376
552 1360
553 1360
554 1360
555 1328
556 1328
557 1312
558 1344
559 1344
560 1360
561 1408
562 1392
563 1408
564 1504
565 1424
566 1424
567 1424
568 1424
569 1424
570 1424
571 1424
572 1408
573 1408
574 1360
575 1408
576 1424
577 1424
578 1424
579 1408
580 1408
581 1376
582 1376
583 1392
584 1344
585 1344
586 1376
587 1376
588 1376
589 1360
590 1344
591 1360
592 1360
593 1360
594 1360
595 1360
596 1296
597 1296
598 1296
599 1312
600 1312
601 1280
602 1264
603 1264
604 1280
605 1280
606 1280
607 1312
608 1312
609 1280
610 1328
611 1392
612 1408
613 1376
614 1376
615 1408
616 1408
617 1408
618 1424
619 1424
620 1424
621 1424
622 1408
623 1360
624 1376
625 1360
626 1344
627 1376
628 1376
629 1376
630 1360
631 1376
632 1376
633 1376
634 1312
635 1312
636 1280
637 1200
638 1200
639 1184
640 1216
641 1216
642 1216
643 1216
644 1216
645 1216
646 1184
647 1232
648 1248
649 1264
650 1264
651 1264
652 1264
653 1296
654 1312
655 1312
656 1312
657 1312
658 1296
659 1232
660 1216
661 1216
662 1216
663 1232
664 1232
665 1232
666 1232
667 1232
668 1280
669 1280
670 1280
671 1232
672 1216
673 1232
674 1232
675 1232
676 1200
677 1152
678 1152
679 1168
680 1168
681 1136
682 1200
683 1216
684 1216
685 1248
686 1232
687 1232
688 1232
689 1216
690 1216
691 1200
692 1200
693 1232
694 1232
695 1248
696 1248
697 1248
698 1248
699 1200
700 1136
701 1136
702 1152
703 1152
704 1120
705 1104
706 1120
707 1168
708 1200
709 1200
710 1200
711 1216
712 1216
713 1216
714 1216
715 1216
716 1184
717 1152
718 1136
719 1120
720 1120
721 1120
722 1120
723 1152
724 1152
725 1184
726 1136
727 1200
728 1200
729 1200
730 1200
731 1200
732 1200
733 1200
734 1216
735 1200
736 1216
737 1232
738 1232
739 1232
740 1232
741 1216
742 1264
743 1232
744 1232
745 1216
746 1216
747 1232
748 1264
749 1200
750 1200
751 1200
752 1168
753 1200
754 1200
755 1184
756 1184
757 1184
758 1136
759 1152
760 1152
761 1136
762 1120
763 1120
764 1120
765 1120
766 1136
767 1136
768 1200
769 1200
770 1200
771 1152
772 1152
773 1152
774 1152
775 1168
776 1136
777 1136
778 1136
779 1136
780 1152
781 1152
782 1136
783 1168
784 1184
785 1184
786 1184
787 1200
788 1280
789 1280
790 1248
791 1248
792 1248
793 1280
794 1168
795 1168
796 1168
797 1168
798 1152
799 1104
800 1152
801 1184
802 1152
803 1168
804 1152
805 1152
806 1136
807 1152
808 1152
809 1136
810 1136
811 1136
812 1120
813 1136
814 1136
815 1104
816 1104
817 1104
818 1104
819 1104
820 1104
821 1008
822 1056
823 1056
824 1056
825 1056
826 1088
827 1088
828 1104
829 1104
830 1088
831 1088
832 1104
833 1056
834 1056
835 976
836 976
837 976
838 976
839 992
840 976
841 944
842 976
843 992
844 992
845 992
846 1008
847 912
848 912
849 912
850 912
851 912
852 912
853 912
854 880
855 912
856 896
857 896
858 896
859 912
860 912
861 848
862 864
863 864
864 864
865 784
866 800
867 848
868 848
869 848
870 832
871 816
872 816
873 816
874 768
875 784
876 752
877 752
878 752
879 784
880 784
881 784
882 784
883 800
884 816
885 816
886 800
887 816
888 752
889 800
890 800
891 800
892 768
893 736
894 736
895 768
896 800
897 816
898 816
899 816
900 816
901 800
902 800
903 720
904 704
905 720
906 720
907 704
908 656
909 672
910 672
911 720
912 720
913 720
914 720
915 720
916 704
917 704
918 640
919 608
920 640
921 640
922 656
923 720
924 720
925 720
926 720
927 736
928 736
929 720
930 704
931 656
932 672
933 672
934 672
935 672
936 608
937 608
938 608
939 624
940 592
941 592
942 592
943 624
944 624
945 624
946 608
947 608
948 608
949 624
950 640
951 592
952 592
953 592
954 624
955 624
956 624
957 624
958 624
959 624
960 624
961 624
962 592
963 544
964 544
965 560
966 544
967 544
968 544
969 512
970 512
971 512
972 528
973 560
974 560
975 560
976 560
977 592
978 608
979 592
980 592
981 464
982 432
983 448
984 448
985 448
986 432
987 320
988 320
989 368
990 368
991 352
992 352
993 368
994 368
995 368
996 368
997 368
998 384
999 288
1000 288
1001 288
1002 288
1003 304
1004 304
1005 304
1006 224
1007 160
1008 160
1009 160
1010 176
1011 192
1012 144
1013 176
1014 144
1015 144
1016 128
1017 128
1018 128
1019 128
1020 128
1021 128
1022 144
1023 16
1024 16
};

\end{axis}
\end{tikzpicture}
  \caption{Memory requirements to store the list of operations of the memory-based decoder for different values of $K$. The polar code of length $1024$ is used which is adopted in 5G \cite{3gpp_polarSequence}.}
  \label{fig:fastSchedule}
\end{figure}
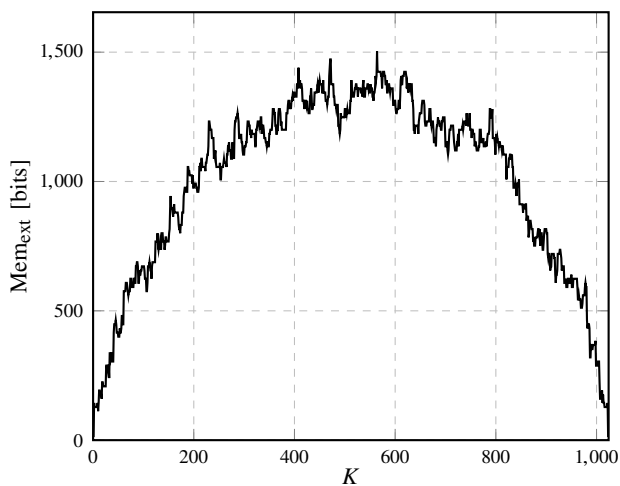

Artisan dual-port SRAM compiler was used for the implementation of the external memories. Table~\ref{tab:SRAM} shows the area occupation of the external memory for the proposed decoder in comparison with the memory-based decoder. While the proposed decoder supports all the code rates, the memory requirement of the memory-based decoder depends on the number of code rates it can support. In Table~\ref{tab:SRAM}, we showed four cases of memory requirements for the memory-based decoder: when it supports $5$ code rates of $\{\frac{1}{12},\frac{1}{6},\frac{1}{3},\frac{1}{2},\frac{2}{3}\}$, when it supports $10$ code rates of $\{\frac{1}{16},\frac{1}{12},\frac{1}{8},\frac{1}{6},\frac{1}{4},\frac{1}{3},\frac{1}{2},\frac{2}{3},\frac{5}{6},\frac{7}{8}\}$, when it supports $20$ code rates of $\{\frac{1}{24},\frac{1}{16},\frac{1}{12},\frac{1}{8},\frac{1}{6},\frac{1}{5},\frac{1}{4},\frac{5}{16},\frac{1}{3},\frac{3}{8},\frac{2}{5},\frac{1}{2},\frac{3}{5},\frac{5}{8},\frac{2}{3},\frac{11}{16},\frac{3}{4},\frac{4}{5},\frac{5}{6},\frac{7}{8}\}$, and when it supports all the code rates considered in 5G, similar to the proposed decoder. It can be seen that the proposed decoder occupies a smaller area in comparison with the memory-based decoder even when the memory-based decoder supports only $5$ code rates. In fact, the area occupation of the memory-based decoder increases as the number of supported code rates increases. This consequently reduces the area efficiency of the memory-based decoder as can be seen in Table~\ref{tab:SRAM}. The area occupation of the proposed decoder is only $38\%$ of that of the memory-based decoder when both decoders support all 5G code rates.

\begin{table}[t]
\centering
\caption{SRAM synthesis results for external memories.}
\label{tab:SRAM}
\setlength{\extrarowheight}{2.5pt}
\begin{tabular}{llccc}
\toprule

 & & $\text{Mem}_{\text{ext}}$ & $A_{\text{total}}$ & $A_{\text{eff}}$ @ $R=\frac{1}{2}$ \\
 & & [mm$^2$] & [mm$^2$] & [$\frac{\rm Mb/s}{{\rm mm}^2}$] \\
\midrule
Proposed & All rates & $0.039$ & $1.449$ & $844$ \\
\cmidrule(lr){1-2}
\cmidrule(lr){3-5}
\multirow{4}{*}{Memory-based} & $5$ rates & $0.033$ & $1.487$ & $798$ \\
 & $10$ rates & $0.047$ & $1.501$ & $790$ \\
 & $20$ rates & $0.080$ & $1.534$ & $773$ \\
 & All rates & $2.358$ & $3.812$ & $311$ \\
\bottomrule
\end{tabular}
\end{table} 

It is worth mentioning that the goal of this paper is to propose a low-complexity approach to generate the list of operations for fast SC-based decoders directly on hardware, therefore, allowing for the implementation of a fast and rate-flexible SC-based decoder. Our implementation results show that by using the proposed method, there is a negligible area occupation overhead or throughput loss in comparison with the memory-based decoders, while having a completely rate-flexible decoder.

\begin{table*}
	\centering
	\caption{Comparison with state-of-the-art decoders.}
	\label{tab:compare}
		\setlength{\extrarowheight}{1.7pt}
%\scriptsize
	\begin{tabularx}{0.95\textwidth}{lYYYYYY}
\toprule
 & This work & \cite{hashemi_FSSCL_TSP} & \cite{hashemi_SSCL_TCASI} & \cite{yuan_multibit_LLR} & \cite{xiong_symbol}\textsuperscript{\textdagger} & \cite{lin_SCL}\textsuperscript{\textdagger} \\

\cmidrule(lr){1-1}
\cmidrule(lr){2-2}
\cmidrule(lr){3-3}
\cmidrule(lr){4-4}
\cmidrule(lr){5-5}
\cmidrule(lr){6-6}
\cmidrule(lr){7-7}

$A_{\text{total}}$ [mm$^2$] & $1.449$ & $1.797$ ($4.155$) & $1.22$ ($3.578$) & $0.62$ & $0.73$ & $2.00$ \\

$f$ [MHz] & $955$ & $840$ & $961$ & $498$ & $692$ & $558$ \\

$T_P$ [Mb/s] & $1223$ & $1338$ & $1146$ & $935$ & $551$  & $1578$ \\

Latency [$\upmu$s] & $0.84$ & $0.77$ & $0.89$ & $1.10$ & $1.86$  & $0.66$ \\

$A_{\rm eff}$ [Mb/s/mm$^2$] & $844$ & $744$ ($322$) & $939$ ($320$) & $1508$ & $755$ & $789$ \\
\midrule[\heavyrulewidth]
\multicolumn{7}{l}{\textsuperscript{\textdagger}\footnotesize{The results are originally based on TSMC 90~nm technology and are scaled to TSMC 65~nm technology.}} \\
\bottomrule
	\end{tabularx}
\end{table*}

The main advantage of the proposed approach is that given the design code length, any code with the same $N$ can be decoded using the Fast-SSCL-SPC algorithm without foreknowledge of the information/frozen bit sequence, regardless of rate and target $E_b/N_0$. On the contrary, the memory-based decoder needs to store the \texttt{NodeType} and \texttt{NodeStage} information for each considered code in an external memory of ${\rm Mem}_{\text{ext}}$ bits.
%Considering $N=1024$, the \texttt{NodeType} and \texttt{NodeStage} signals require a $4$-bit representation. The last rows of Table~\ref{tab:HW} report the external memory requirements ${\rm Mem}_{\text{ext}}$ for each considered code rate: while the proposed identification method does not require any external memory, the memory-based decoder requires thousands of memory bits for each code rate, with a total of $6824$ bits for the rates considered in 5G. Implemented in TSMC 65~nm CMOS technology with registers, the area occupation of such external memory would be $0.717$~mm$^2$.

Table~\ref{tab:compare} compares the proposed decoder to other architectures in the state of the art which use $64$ parallel PEs. Results are reported for $\mathcal{P}(1024,512)$ and $L=4$. The architectures presented in \cite{hashemi_FSSCL_TSP} and \cite{hashemi_SSCL_TCASI} are based on the Fast-SSCL-SPC and SSCL-SPC algorithms, respectively: it is possible to add the cost of the external memory directly to their area occupation and evaluate its impact on the area efficiency, considering all the code rates in 5G are supported. These modified results are reported within parentheses. It can be seen that the external memory increases $A_{\text{total}}$ by $131\%$ in \cite{hashemi_FSSCL_TSP} and by $193\%$ in \cite{hashemi_SSCL_TCASI}: the proposed special node identification technique is thus able to substantially limit the area occupation and increase the area efficiency in both architectures. The architecture presented in this work has higher $A_{\rm eff}$ and lower $A_{\text{total}}$ than both \cite{hashemi_SSCL_TCASI} and \cite{hashemi_FSSCL_TSP}. Different design choices in terms of concurrent operations in the special nodes lead to a slightly lower $T_P$ than \cite{hashemi_FSSCL_TSP}, together with a substantially lower $A_{\text{total}}$ and higher $A_{\rm eff}$.

The architectures presented in \cite{yuan_multibit_LLR,xiong_symbol,lin_SCL} do not rely on a special-node-based decoding algorithm: thus, the throughput benefits and complexity saving of the proposed node identification technique cannot be directly evaluated. Moreover, the synthesis results of \cite{yuan_multibit_LLR} were reported in 90~nm technology, but they were carried out in 65~nm technology. Therefore, a factor of $90/65$ was used to convert the frequency, and a factor of $\left(65/90\right)^2$ was used to convert the area of the decoder from 90~nm to 65~nm technology in \cite{yuan_multibit_LLR}. The same conversion factors were used to convert to 65~nm technology the synthesis results in \cite{xiong_symbol,lin_SCL}, which were synthesized with a 90~nm node.

Our work shows $31\%$ higher throughput and $31\%$ lower latency with respect to the multibit decision SCL decoder architecture of \cite{yuan_multibit_LLR}, while the smaller area occupation of \cite{yuan_multibit_LLR} leads to a higher $A_{\rm eff}$. The decoder in \cite{xiong_symbol} shows lower area occupation than our work. However, the architecture proposed in this work achieves $122\%$ higher throughput and $55\%$ lower latency, leading to $11\%$ higher area efficiency. The high throughput SCL decoder architecture of \cite{lin_SCL} achieves higher throughput and lower latency than this work, at the cost of $38\%$ higher area occupation and $7\%$ lower $A_{\rm eff}$. Moreover, \cite{lin_SCL} relies on tunable parameters that can lead to more than $0.2$~dB error-correction performance loss. These parameters also reduce the flexibility of the decoder, since for each code rate, a different set of parameters needs to be used. However, the decoder proposed in this paper is designed to guarantee rate-flexibility, making it suitable for 5G applications.

\section{Conclusion} \label{sec:conc}

The main drawback of the fast successive-cancellation-based decoders for polar codes is that they require to store a list of operations for each code rate in a dedicated memory, in order to tell the decoder when a special node in a polar code graph is reached. In this paper, we tackled this issue by proposing a technique to generate the list of operations on-the-fly directly in hardware. We proved that this technique can be applied to polar codes of any rate, therefore, removing the memory needed to store the list of operations completely. We proposed a hardware architecture for the proposed technique and showed that the total area occupation of the proposed decoder is $38\%$ of the base-line memory-based decoder, if 5G code rates are considered.

\section*{Acknowledgments}

The authors would like to thank Arash Ardakani and Harsh Aurora of McGill University for helpful discussions. S.~A.~Hashemi is supported by a Postdoctoral Fellowship from the Natural Sciences and Engineering Research Council of Canada (NSERC). M.~Mondelli is supported by an Early Postdoc.Mobility fellowship from the Swiss National Science Foundation and by the Simons Institute for the Theory of Computing.

% Generated by IEEEtran.bst, version: 1.12 (2007/01/11)


\begin{thebibliography}{10}
\providecommand{\url}[1]{#1}
\csname url@samestyle\endcsname
\providecommand{\newblock}{\relax}
\providecommand{\bibinfo}[2]{#2}
\providecommand{\BIBentrySTDinterwordspacing}{\spaceskip=0pt\relax}
\providecommand{\BIBentryALTinterwordstretchfactor}{4}
\providecommand{\BIBentryALTinterwordspacing}{\spaceskip=\fontdimen2\font plus
\BIBentryALTinterwordstretchfactor\fontdimen3\font minus
  \fontdimen4\font\relax}
\providecommand{\BIBforeignlanguage}[2]{{%
\expandafter\ifx\csname l@#1\endcsname\relax
\typeout{** WARNING: IEEEtran.bst: No hyphenation pattern has been}%
\typeout{** loaded for the language `#1'. Using the pattern for}%
\typeout{** the default language instead.}%
\else
\language=\csname l@#1\endcsname
\fi
#2}}
\providecommand{\BIBdecl}{\relax}
\BIBdecl

\bibitem{arikan}
E.~Ar{\i}kan, ``Channel polarization: A method for constructing
  capacity-achieving codes for symmetric binary-input memoryless channels,''
  \emph{{IEEE} Trans. Inf. Theory}, vol.~55, no.~7, pp. 3051--3073, July 2009.

\bibitem{tal_list}
I.~Tal and A.~Vardy, ``List decoding of polar codes,'' \emph{{IEEE} Trans. Inf.
  Theory}, vol.~61, no.~5, pp. 2213--2226, May 2015.

\bibitem{afisiadis}
O.~Afisiadis, A.~Balatsoukas-Stimming, and A.~Burg, ``A low-complexity improved
  successive cancellation decoder for polar codes,'' in \emph{Asilomar Conf. on
  Signals, Syst. and Comput.}, November 2014, pp. 2116--2120.

\bibitem{alamdar}
A.~Alamdar-Yazdi and F.~R. Kschischang, ``A simplified successive-cancellation
  decoder for polar codes,'' \emph{{IEEE} Commun. Lett.}, vol.~15, no.~12, pp.
  1378--1380, December 2011.

\bibitem{sarkis}
G.~Sarkis, P.~Giard, A.~Vardy, C.~Thibeault, and W.~Gross, ``Fast polar
  decoders: Algorithm and implementation,'' \emph{{IEEE} J. Sel. Areas
  Commun.}, vol.~32, no.~5, pp. 946--957, May 2014.

\bibitem{hanif}
M.~Hanif and M.~Ardakani, ``Fast successive-cancellation decoding of polar
  codes: Identification and decoding of new nodes,'' \emph{IEEE Commun. Lett.},
  vol.~21, no.~11, pp. 2360--2363, November 2017.

\bibitem{hashemi_SSCL_TCASI}
S.~A. Hashemi, C.~Condo, and W.~J. Gross, ``A fast polar code list decoder
  architecture based on sphere decoding,'' \emph{{IEEE} Trans. Circuits Syst.
  {I}}, vol.~63, no.~12, pp. 2368--2380, December 2016.

\bibitem{hashemi_FSSCL_TSP}
S.~A. Hashemi, C.~Condo, and W.~J. Gross, ``Fast and flexible
  successive-cancellation list decoders for polar codes,'' \emph{{IEEE} Trans.
  Signal Process.}, vol.~65, no.~21, pp. 5756--5769, November 2017.

\bibitem{giardFlip}
P.~Giard and A.~Burg, ``Fast-{SSC}-flip decoding of polar codes,'' in
  \emph{IEEE Wireless Commun. and Netw. Conf. Workshops}, April 2018, pp.
  73--77.

\bibitem{Condo_SIPS}
C.~Condo, S.~A. Hashemi, and W.~J. Gross, ``Efficient bit-channel reliability
  computation for multi-mode polar code encoders and decoders,'' in \emph{IEEE
  Int. Workshop on Signal Process. Syst.}, October 2017, pp. 1--6.

\bibitem{3gpp_polarSequence}
\mbox{3GPP TSG RAN WG1 \#90}, ``Summary of email discussion {[NRAH2-11]} polar
  code sequence,''
  \url{http://www.3gpp.org/ftp/tsg_ran/wg1_rl1/TSGR1_90/Docs/R1-1712174.zip},
  {P}rague, Czech Republic, August 2017.

\bibitem{3gpp_polarInfo}
\mbox{3GPP}, ``Multiplexing and channel coding,''
  \url{http://www.3gpp.org/ftp/Specs/archive/38_series/38.212/38212-f11.zip},
  {A}pril 2018.

\bibitem{balatsoukas}
A.~Balatsoukas-Stimming, M.~Bastani~Parizi, and A.~Burg, ``{LLR}-based
  successive cancellation list decoding of polar codes,'' \emph{{IEEE} Trans.
  Signal Process.}, vol.~63, no.~19, pp. 5165--5179, October 2015.

\bibitem{leroux}
C.~Leroux, A.~Raymond, G.~Sarkis, and W.~Gross, ``A semi-parallel
  successive-cancellation decoder for polar codes,'' \emph{{IEEE} Trans. Signal
  Process.}, vol.~61, no.~2, pp. 289--299, January 2013.

\bibitem{hanif_SCL}
M.~Hanif, M.~H. Ardakani, and M.~Ardakani, ``Fast list decoding of polar codes:
  Decoders for additional nodes,'' in \emph{IEEE Wireless Commun. and Netw.
  Conf. Workshops}, April 2018, pp. 37--42.

\bibitem{Sc16}
C.~Sch{\"u}rch, ``A partial order for the synthesized channels of a polar
  code,'' in \emph{IEEE Int. Symp. on Inform. Theory}, July 2016, pp. 220--224.

\bibitem{BDOT16}
M.~Bardet, V.~Dragoi, A.~Otmani, and J.-P. Tillich, ``Algebraic properties of
  polar codes from a new polynomial formalism,'' in \emph{IEEE Int. Symp. on
  Inform. Theory}, July 2016, pp. 230--234.

\bibitem{mondelliUPO}
M.~Mondelli, S.~H. Hassani, and R.~Urbanke, ``Construction of polar codes with
  sublinear complexity,'' in \emph{IEEE Int. Symp. on Inform. Theory}, June
  2017, pp. 1853--1857.

\bibitem{RiU08}
T.~Richardson and R.~Urbanke, \emph{Modern Coding Theory}.\hskip 1em plus 0.5em
  minus 0.4em\relax Cambridge University Press, 2008.

\bibitem{trifonov_GA}
P.~Trifonov, ``Efficient design and decoding of polar codes,'' \emph{{IEEE}
  Trans. Commun.}, vol.~60, no.~11, pp. 3221--3227, November 2012.

\bibitem{he_beta}
G.~He, J.~C. Belfiore, I.~Land, G.~Yang, X.~Liu, Y.~Chen, R.~Li, J.~Wang,
  Y.~Ge, R.~Zhang, and W.~Tong, ``Beta-expansion: A theoretical framework for
  fast and recursive construction of polar codes,'' in \emph{IEEE Global
  Commun. Conf.}, December 2017, pp. 1--6.

\bibitem{tal}
I.~Tal and A.~Vardy, ``How to construct polar codes,'' \emph{{IEEE} Trans. Inf.
  Theory}, vol.~59, no.~10, pp. 6562--6582, October 2013.

\bibitem{guo_bhat}
J.~Guo, M.~Qin, A.~G. i~FĂ bregas, and P.~H. Siegel, ``Enhanced belief
  propagation decoding of polar codes through concatenation,'' in \emph{IEEE
  Int. Symp. on Inf. Theory}, June 2014, pp. 2987--2991.

\bibitem{vangala}
\BIBentryALTinterwordspacing
H.~{Vangala}, E.~{Viterbo}, and Y.~{Hong}, ``{A comparative study of polar code
  constructions for the {AWGN} channel},'' \emph{ArXiv e-prints}, January 2015.
  [Online]. Available: \url{https://arxiv.org/abs/1501.02473}
\BIBentrySTDinterwordspacing

\bibitem{hashemi_TCOM}
S.~A. {Hashemi}, M.~{Mondelli}, S.~H. {Hassani}, C.~{Condo}, R.~L. {Urbanke},
  and W.~J. {Gross}, ``Decoder partitioning: Towards practical list decoding of
  polar codes,'' \emph{{IEEE} Trans. Commun.}, vol.~66, no.~9, pp. 3749--3759,
  September 2018.

\bibitem{hashemi_JETCAS}
S.~A. Hashemi, C.~Condo, F.~Ercan, and W.~J. Gross, ``Memory-efficient polar
  decoders,'' \emph{IEEE J. on Emerging and Sel. Topics in Circuits and Syst.},
  vol.~7, no.~4, pp. 604--615, December 2017.

\bibitem{yuan_multibit_LLR}
B.~Yuan and K.~K. Parhi, ``{LLR}-based successive-cancellation list decoder for
  polar codes with multibit decision,'' \emph{{IEEE} Trans. Circuits Syst.
  {II}}, vol.~64, no.~1, pp. 21--25, January 2017.

\bibitem{xiong_symbol}
C.~Xiong, J.~Lin, and Z.~Yan, ``Symbol-decision successive cancellation list
  decoder for polar codes,'' \emph{{IEEE} Trans. Signal Process.}, vol.~64,
  no.~3, pp. 675--687, February 2016.

\bibitem{lin_SCL}
J.~Lin, C.~Xiong, and Z.~Yan, ``A high throughput list decoder architecture for
  polar codes,'' \emph{{IEEE} Trans. {VLSI} Syst.}, vol.~24, no.~6, pp.
  2378--2391, June 2016.

\end{thebibliography}
\end{document}